\documentclass[11pt,letter]{article}
%\documentclass[11pt,onecolumn,draftcls]{IEEEtran}
%\documentclass[onecolumn,draftcls]{IEEEtran}

%make title bold and 14 pt font (Latex default is non- bold, 16pt) 
\title{Sparsity Pattern Recovery in Bernoulli-Gaussian Signal Model}
%for single author (just remove % characters)
\author{Subhojit Som, and Lee C. Potter
\vspace{0.2in}
\\
  Department of Electrical and Computer Engineering \\
  The Ohio State University, Columbus, OH, 43210, USA}
%for two authors (this is what is printed) 
%\author{\begin{tabular}[t]{c@{\extracolsep{8em}}c} 
%I. M. Author  & M. Y. Coauthor \\
 %\\
        %My Department & Coauthor Department \\
        %My Institute & Coauthor Institute \\
        %City, STATE~~zipcode & City, STATE~~zipcode
%\end{tabular}}

%\pagestyle{headings}
%\pagestyle{empty}
\setlength{\parindent}{0in}
\setlength{\parskip}{0.4cm}

\usepackage{geometry}
\geometry{letterpaper} %or a4paper, a5paper, etc.
\geometry{left=1.00in,right=1.00in,top=1.25in,bottom=1.25in}

\usepackage{color}
\usepackage{graphicx,psfrag}
\usepackage{amsfonts}
\usepackage{amssymb}
\usepackage{amsthm}
\usepackage{amsmath}
\usepackage{epsfig}
\usepackage{cite}

\theoremstyle{plain}  \newtheorem{theorem}{Theorem}
\theoremstyle{plain}  
\theoremstyle{plain}  \newtheorem{proposition}{Proposition}
\theoremstyle{plain}  \newtheorem{corollary}{Corollary}[theorem]
\theoremstyle{plain}

%\def\qed {{ \unskip\nobreak\hfil\penalty50
%        \hskip 2em\hbox{}\nobreak\hfil \vrule height 1ex width .5em
%        \parfillskip=0pt \finalhyphendemerits=0 \par}\medskip}

%\usepackage{fullpage} % for changing margin

% for BMAP final before IEEE
%\addtolength{\oddsidemargin}{-.875in}
%\addtolength{\evensidemargin}{-.875in}
%\addtolength{\textwidth}{1.75in}
%\addtolength{\textheight}{0.875in}

%
%\addtolength{\topmargin}{-.875in}
%\addtolength{\textheight}{1.75in}

%\usepackage{times}
%\usepackage[pdftex,pdfnewwindow=true,bookmarks,colorlinks]{hyperref}

\newcommand{\ve}[1]{\ensuremath{\boldsymbol{#1}}}
\renewcommand{\sp}[1]{\ensuremath{\mathcal{#1}}}
\def\delRIP{\varepsilon}

\def\Ph{\ve{\Phi}}
\def\Phs{\ve{\Phi}(S)}
\def\Phsi{\ve{\Phi}(S)^{-1}}

%%% added for psfrag

             % needed for psfrag to work!!!

\begin{document}

%\title{Sparsity Pattern Recovery in Bernoulli-Gaussian Signal Model}
%% for IEEEtrans
%\author{Subhojit Som and Lee C. Potter%
%%\thanks{This work was partially supported by .... }
%\thanks{The authors are with the Department of Electrical and Computer Engineering, 
%The Ohio State University, Columbus, OH, 43210, USA.}% <-this % stops a space
%}
\date{}

% The paper headers
%\markboth{IEEE Transactions on IT/SP}
%{Sparsity Pattern Recovery}

\maketitle
%I don't know why I have to reset thispagestyle, but otherwise get page numbers 
%\thispagestyle{empty}

\begin{abstract}
In compressive sensing, sparse signals are recovered from underdetermined noisy linear
observations. One of the interesting problems which attracted a lot of attention in recent times is the 
support recovery or sparsity pattern recovery problem. The aim is to identify the non-zero elements in 
the original sparse signal. In this article we consider the sparsity pattern recovery problem under a 
probabilistic signal model where
the sparse support follows a Bernoulli distribution and the signal restricted to this support
follows a Gaussian distribution. We show that the energy in the original signal 
restricted to the missed support of the MAP estimate is bounded above and this bound is of the 
order of 
energy in the projection of the noise signal to the subspace spanned by the active coefficients.
We also derive sufficient conditions for no misdetection and no false alarm in support recovery.
\end{abstract}

%\begin{IEEEkeywords}
%sparsity patter recovery, support recovery, compressive sensing, restricted isometry property
%\end{IEEEkeywords}

%\subsection*{\centering Abstract}
%\begin{center}
%\begin{tabular}{p{5in}}
%{\small The abstract goes here...}
%\end{tabular}
%\end{center}

%%%%%%%%%%%%%%%%%%%%%%%%%%%%%%%%%%%%%%%%%%%%%%%%%%%%%%%%%%%%%%%%%%%%%%%%%%%%%%%%
\section{Introduction} \label{sec:BMAP:intro}
%In recent years there has been tremendous progress in the emerging field of {\em compressive sensing} also known 
%as {\em compressive sampling}.
%In compressive sensing, underdetermined linear system of equations is considered where signals can be recovered 
%from a very small number of observations. 
We consider the linear observation model 
\begin{eqnarray}
\ve{y} = \ve{A} \ve{x} + \ve{e},  \label{eqn:sys}
\end{eqnarray}
where $\ve{x} \in \mathbb{R}^N$ is the signal vector, $\ve{e} \in \mathbb{R}^M$ is the noise vector, 
$\ve{A} \in \mathbb{R}^{M \times N}$ is the measurement matrix, and $M \ll N$. 
In spite of this being an ill-posed problem, various algorithms have been proposed for estimation of the unknown 
signal $\ve{x}$ and performance guarantees have been proven for them subject to sparsity  
of the signal $\ve{x}$ and some coherence constraints on the measurement matrix $\ve{A}$. 
This technique is known as compressive sensing or compressive 
sampling~\cite{ref:compressedsensing06, ref:CandesTaoNoiseless06, 
ref:StableSparsedonoho06, ref:CandesTaoNoisy06, ref:TroppRelax06} and it has received a lot of 
attention in recent past among researchers.

In this article we consider the problem of sparse support recovery, also known as sparsity pattern 
recovery, where the aim is to identify the indices of the non-zero elements of $\ve{x}$.  The main 
contribution of this article is non-asymptotic analysis of support recovery in terms of quality of the
recovered support set. We analyze how much energy of the true signal remains in the missed coefficients
under Bernoulli-Gaussian signal prior assumption. We also derive a sufficient condition for perfect 
support recovery under this signal model.

In section~\ref{sec:BMAP:related_work} we discuss the most relevant prior work related to this article, in 
section~\ref{sec:BMAP:contrib} we briefly describe the contribution of this article. 
In section~\ref{sec:BMAP:signal_model} we describe the probabilistic signal model for the variable~$\ve{x}$,
in section~\ref{sec:BMAP:coherence} the coherence property of the measurement matrix~$\ve{A}$ is defined. 
Section~\ref{sec:BMAP:MAP} outlines the support recovery problem and defines the estimator for support set.
Two theorems regarding the energy bound on the missed support and sufficiency condition for perfect support 
recovery are stated in sections~\ref{sec:BMAP:energy} and~\ref{sec:BMAP:perf_recov} respectively. The proofs are 
given in section~\ref{sec:BMAP:proofs} and the results are discussed in section~\ref{sec:BMAP:disc}.

%If $K$ denotes the number of non-zero elements in \ve{x}, 
%most of the algorithms require that $M = O(K \log (N/K))$ for recovery of the sparse signal \ve{x} with error 
%norm comparable to noise norm with very high probability. 
%
%Algorithms for estimation of sparse signals have  existed in the literature for several 
%decades~\cite{ref:claerbout73, ref:Hogbom74, ref:santosa86, ref:BP93, ref:basis_pursuit98, ref:lasso96} 
%but the renewed interest in the area can be attributed to the strong theoretical results and performance guarantees
%for these algorithms established during the last few years~\cite{ref:compressedsensing06, ref:CandesTaoNoiseless06, 
%ref:StableSparsedonoho06, ref:CandesTaoNoisy06, ref:TroppRelax06}. 

\subsection{Related Work} \label{sec:BMAP:related_work}
Significant amount of work has been done in recent times on signal recovery in compressive sensing. The $\ell_2$-norm
of error in estimating the signal $\ve{x}$ is the most popular performance 
metric~\cite{ref:CandesTaoNoisy06,ref:StableSparsedonoho06,ref:CoSaMP}, 
but in the noisy setting stability of the solution and boundedness of this performance metric do not give any direct 
guarantee about support recovery. Here we briefly describe the sparsity pattern recovery results most relevant to
our work. Donoho {\em et al.} showed in their work~\cite{ref:StableSparsedonoho06}
that $\ell_1$-constrained quadratic program with exaggerated noise level guarantees partial support recovery. They 
also derived the upper bound on the number of non-zero elements in the signal vector in terms of mutual coherence 
of the measurement matrix and minimum absolute value of the non-zero elements in the true signal for perfect support
recovery using an orthogonal greedy algorithm. Cand\`{e}s {\em et al.} showed in~\cite{ref:CandesMSL1} that if the 
measurement matrix satisfies certain coherence properties and the signs of the non-zero elements of the signal 
are equally likely to be positive and negative then $\ell_1$-regularized least squares solution recovers the signed
support perfectly with very high probability when the regularization constant is chosen appropriately and the
minimum absolute value of the non-zero elements of the signal is above certain threshold. Recovery of 
signed support means
the support sets of true signal and the estimate are identical and the non-zero elements in the true signal and the 
estimate have the same signs. Zhao {\em et al.} showed in~\cite{ref:MSelLASSO}
that the irrepresentable condition is almost necessary and sufficient for LASSO to select the true model both in 
the classical fixed $N$ setting and in the large $N$ setting as the observation size $M$ gets large. At some special 
scenarios this irrepresentable condition coincides with the coherence condition used in the work of Donoho {\em et al.}
A similar condition is used by Meinshausen {\em et al.} in~\cite{ref:MeinModelSel} to prove a model selection 
consistency result for Gaussian graphical model selection using the LASSO.
Using replica method Guo~{\em et al.} showed~\cite{ref:GuoSL2009} that the posterior distribution of estimating a 
single coefficient becomes asysmptotically decoupled from estimation of other coefficients. Detecting a single
coefficient is analogous to detecting this input coefficient with all other coefficients suppressed, but based
on a noisier observation. They derived 
the maximum probability of making an error in detecting a single coefficient and the corresponding MMSE under the 
high SNR and large system limits. 
Rangan~{\em et al.}~\cite{ref:RanganReplica09} use the same replica claim framework to obtain the mean squared error 
in estimation of the variable $\ve{x}$ under the large system limits for linear, LASSO, and zero-norm regularized
estimators.

There is another class of papers where the minimum number of observations~$M$ needed for perfect support recovery or 
partial support recovery expressed as a fraction of the true support size is 
investigated~\cite{ref:ReevesSB, ref:FletcherNS, ref:Wainwright1, ref:Wainwright2, ref:Akcakaya}. 
In these articles it is assumed that elements of the measurement matrix are i.i.d. Gaussian.
Necessary and sufficient conditions for exhaustive search based decoders and $\ell_1$-constrained least squares
are derived in these articles.

\subsection{Contributions} \label{sec:BMAP:contrib}
Our results are non-asymptotic with fixed model dimensions. Except~\cite{ref:StableSparsedonoho06} 
and~\cite{ref:CandesMSL1} other support recovery results for linear observation model, discussed 
in section~\ref{sec:BMAP:related_work}, are 
aysmptotic analyses. Our first result is about partial support recovery. 
We characterize any support set in terms of the energy in the true signal restricted to this support set.
More specifically, we explore the 
relationship between energy in the missed support and the noise energy under the probabilistic model where the
signal prior is known. 
Most earlier partial support recovery results characterize the fraction of the support recovered
{\em i.e.,} they do not not distinguish between
missing the coefficient with the highest absolute value and the lowest absolute value
but our performance metric captures that.
To the best of our knowledge the only exception is the work by
Akcakaya {\em et al.}~\cite{ref:Akcakaya}. 
They investigated the number of measurements needed for partial support 
recovery in terms of fraction of total energy in the true signal restricted to the recovered support. But 
their analysis is asymptotic whereas we have considered fixed model dimensions.
Our second result is about sufficient conditions for guaranteeing no missed coefficient and no false 
detection for this Bernoulli-Gaussian signal model when the absolute value of any active coefficient 
is bounded below with a very high probability.
%Suppose we sort the 
%support set in the decreasing order of magnitudes of the absolute values of the coefficients. In the case of
%partial support recovery, if the recovered support is characterized in terms of fraction of the true support, then there
%is no difference between the two completely opposite scenarios when the support sets are recovered from the beginning 
%of the list or the end of the sorted list and their sizes are same.
%Clearly recovering the higher energy coefficients {\em i.e.,} recovering from the
%beginning of the list is better but it is not reflected
%in the metric. Energy in the recovered or missed support distinguishes these two situations. 

%Also, among the results discussed in section~\ref{sec:BMAP:related_work}, 
%except~\cite{ref:StableSparsedonoho06} and~\cite{ref:CandesMSL1}, the rest are asymptotic analyses 
%whereas in this article we consider a model with fixed dimensions.

\section{Problem Statement} \label{sec:BMAP:prob_state}
\subsection{Signal Model}  \label{sec:BMAP:signal_model}
We consider a probabilistic signal model for the sparse signal $\ve{x}  \in \mathbb{R}^N$. 
Let $S$ be a set whose entries are drawn from the 
set $I=\{1, 2, \ldots, N \}$ in such a way that each entry of 
$I$ is in the set $S$ with probability $p \ll1$ and their inclusion in $S$ is independent of each other. 
Thus the probability that the cardinality of the support set $S$ equals $K$ is given by
$\mathbb{P}[|S|=K] = {N \choose K}p^K(1-p)^{N-K}$. To enforce sparsity we also assume that $p < \frac{1}{2}$.
Each element of $\ve{x}$ is identically zero if the corresponding index is not in the set $S$,
otherwise the element is Gaussian with mean $\mu_1$ and non-zero variance $\sigma_1^2$. The mean $\mu_1$ can be zero or 
non-zero.  Elements of $\ve{x}$ are distributed independently given the support set.
If $\ve{x}_S$ denotes the vector consisting of the elements of \ve{x} whose indices are in the set $S$, 
then the vector $\ve{x}_S$ follows i.i.d. Gaussian distribution {\em i.e.,} 
$\ve{x}_S \sim \mathcal{N}(\mu_1 \ve{1}_{|S|}, \sigma_1^2 \ve{I}_{|S|})$
\footnote{The vector of ones of size $|S| \times 1$ is denoted by $\ve{1}_{|S|}$. Similarly the vector of ones
of size $|S_1| \times 1$ is denoted by $\ve{1}_{|S_1|}$. It is also denoted by $\ve{1}_1$ when there is no ambiguity.
The notations $\ve{1}_{|S_{01}|}$ and $\ve{1}_{01}$ are used interchangeably. The same applies to the subscripts used
for the identity matrix $\ve{I}$.}.
Thus $S$ is the support set of the signal vector $\ve{x}$ with expected cardinality 
$\mathbb{E}[|S|] =Np \ll N$ and $\ve{x}$ is sparse with high probability.  
This Bernoulli-Gaussian model is quite popular in 
literature for a long time~\cite{ref:George1, ref:George2, ref:Haystack} 
for modeling sparse vectors in diverse application areas and is also becoming increasingly popular in the 
compressive sensing research~\cite{ref:FBMP2008, ref:BPCS, ref:GuoSL2009}.

\subsection{Coherence of Measurement Matrix}  \label{sec:BMAP:coherence}
Several conditions have been proposed which characterize coherence properties of the measurement 
matrix $\ve{A}$ and are used for deriving any performance
guarantee for compressive sensing algorithms. Measurement matrix with entries drawn from i.i.d. Gaussian or
Bernoulli distributions, and partial Fourier matrix are known to satisfy these properties.
In~\cite{ref:StableSparsedonoho06}, it is shown that if the mutual coherence {\em i.e.,} the 
magnitude of the maximum entry of the Gram matrix $m(\ve{A}) = \max_{i,j:i \neq j} |(\ve{A}^T \ve{A})_{i,j}|$ is
small then robust signal and support recovery is possible for sparse signals. Another condition known as 
restricted isometry \mbox{property (RIP)} is proposed in~\cite{ref:CandesTaoNoisy06}.
Here we assume that the measurement matrix $\ve{A}$ satisfies RIP with $(4Np,\delRIP)$,
{\em i.e.,} for any sparse vector $\ve{x}$ with cardinality of support set $\leq 4Np$, 
\begin{eqnarray}
(1-\delRIP) \| \ve{x} \| _2^2 \leq  \| \ve{A} \ve{x} \| _2^2 \leq (1+\delRIP) \| \ve{x} \| _2^2. \label{eqn:RIP}
\end{eqnarray}
Though determination of RIP of a given matrix is a NP-hard problem, it can be shown~\cite{ref:BaraniukRIP} that 
random matrices satisfy RIP properties with overwhelming probability. In contrary mutual coherence is a verifiable 
condition but it gives much weaker performance guarantee than RIP.

We note here that the \mbox{constant 4} in the definition of RIP of $\ve{A}$ is arbitrary and a matter of convenience. 
In this article we also assume that $\delRIP \leq \frac{1}{3}$ in order to obtain simple expressions in our results.
Leaving $\delRIP$~as a parameter makes the results difficult to interpret.
We can always choose any other constant instead of 4 in definition of RIP for the measurement matrix 
and a different upper bound on $\delRIP$-value. This will lead to different values of the constants appearing in our results.

\subsection{Support Recovery} \label{sec:BMAP:MAP}
In this article we consider the problem of support recovery {\em i.e.,} identifying the indices corresponding to the
Gaussian with $\sigma_1^2$ variance. 
Assuming additive white Gaussian noise with variance $\sigma_e^2$, {\em i.e.,} $\ve{e} \sim \mathcal{N}(\ve{0}, \sigma_e^2 \ve{I}_M)$,
\begin{eqnarray}
\left. \ve{y}   \right| S \sim \mathcal{N}\left( \mu_1 \ve{A}_S \ve{1}_{|S|}, \Phs \right),  \label{eqn:model_distr}
\end{eqnarray}
%\begin{eqnarray}
%\left. \begin{bmatrix}
%\ve{y} \\ \ve{x}
%\end{bmatrix}  \right| S \sim \mathcal{N}\left(\begin{bmatrix} \mu_1 \ve{A} \ve{1}_{|S|} \\ \mu_1 \ve{1}_{|S|} \end{bmatrix},\begin{bmatrix} \Phs & \sigma_1^2 \ve{A}  \\ \sigma_1^2 \ve{A}^T & \ve{I}_{|S|} \end{bmatrix}\right),  \label{eqn:model_distr}
%\end{eqnarray}
where $\Phs$ is given by, 
\begin{eqnarray}
\Phs = \sigma_1^2 \ve{A}_S \ve{A}_S^T + \sigma_e^2 \ve{I}_M.  \label{eqn:Phs}
\end{eqnarray}
The maximum \mbox{{\em a posteriori} (MAP)} estimate of the support set is given by,
\begin{eqnarray}
\widehat{S}_{\mathrm{MAP}} &=& \arg \max_S p(S|\ve{y}) = \arg \max_S p(\ve{y}|S) p(S)\nonumber\\
&=&  \arg \max_S \int_{\ve{x}} p(\ve{y}|\ve{x},S) p(\ve{x}|S) d\ve{x} \cdot p(S) \nonumber\\
&=& \arg \min_{S} \frac{1}{2} \ln \det (\Phs) + \frac{1}{2} (\ve{y}-\mu_1 \ve{A}_S\ve{1}_{|S|})^T\Phsi (\ve{y}-\mu_1 \ve{A}_S\ve{1}_{|S|}) +  |S| \ln \frac{1-p}{p}. \label{eqn:MAP}
\end{eqnarray}

We have adopted a probabilistic model for the number of active elements, the signal and the noise. Though the 
number of non-zero elements in $\ve{x}$ has mean $Np \ll N$, it can be as large as $N$ with very small 
but non-zero probability. Similarly signal and noise energy can be arbitrarily large with vanishingly small but
non-zero probability. Nevertheless the quantities like cardinality and energy are bounded with overwhelmingly high
probability. 
Keeping this in mind we study the suboptimal estimator which minimizes the MAP cost function subject to the 
constraint $|S| \leq 2Np$: 
\begin{eqnarray}
\widehat{S} = \arg \min_{S: |S| \leq 2Np} \frac{1}{2} \ln \det (\Phs) + \frac{1}{2} (\ve{y}-\mu_1 \ve{A}_S\ve{1}_{|S|})^T\Phsi (\ve{y}-\mu_1 \ve{A}_S\ve{1}_{|S|}) +  |S| \ln \frac{1-p}{p}. \label{eqn:MAP2}
\end{eqnarray}
We define the event $E$ to be the cardinality of the true support being less than or equal to $2Np$. As we see later 
that event $E$ holds with high probability and the estimator defined in~(\ref{eqn:MAP2}) satisfies 
certain performance criteria if event $E$ holds. 
Here we emphasize that instead of $2NP$ we can use $LNp$ for any other $L>1$ in the definition of the event as $E$. Similarly 
we can use any other constraint $|S| \leq QNp$ in the definition of $\widehat{S}$ in~(\ref{eqn:MAP2}), where $Q>1$. The 
choice of $L=Q=2$ is arbitrary in the definition of the event $E$ and the definition of $\widehat{S}$
but related to the constant used in the definition of RIP satisfied by the measurement 
\mbox{matrix \ve{A}}. They are chosen in such a way that $L+Q \leq n$, when $\ve{A}$ satisfies 
RIP with $(nNp,\delRIP)$. As mentioned earlier we have arbitrarily chosen $n=4$.

%\section{Performance Bounds}  \label{sec:BMAP:perf}
\subsection{Energy in Missed Coefficients} \label{sec:BMAP:energy}
Our first theorem, as stated below, shows that the total energy in the missed coefficients is of 
the order of the average energy in the projection of noise to the subspace spanned by the active columns of 
the \mbox{$\ve{A}$ matrix}. Here we make no assumption about the mean of the Gaussian distribution $\mu_1$.
\begin{theorem}[Energy Bound on Missed Coefficients]\label{th:BMAP1}
For the signal and observation models under consideration, the $\ell_2$-norm of the signal restricted to the
index set of missed coefficients is upper bounded by $K_1 \sqrt{Np} \sigma_e$ with probability exceeding 
$(1-e^{-Np(2 \ln2 -1)})(1-3 e^{- Np(\beta-1- \ln \beta)})$, where 
$K_1=2 \left( \sqrt{7 \beta + C} + \sqrt{ \beta} \right)$, 
$C=\ln \left(1+ \frac{4\sigma_1^2}{3\sigma_e^2} \right) + 2 \ln \frac{1-p}{p}$ and $\beta>1$.
\end{theorem}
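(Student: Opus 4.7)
The plan is to exploit the constrained MAP optimality of $\widehat{S}$, turn the resulting inequality into a quadratic in $\|\ve{x}_{S \setminus \widehat{S}}\|_2$ using RIP-based sandwich bounds on $\ln\det \Ph(\cdot)$ and $\Ph(\cdot)^{-1}$, and then pay the probabilistic cost with two Chernoff-type tails: one on the Binomial sparsity $|S|$ and one on chi-squared noise projections. Since $|S|\sim\mathrm{Binomial}(N,p)$, the multiplicative Chernoff bound yields $\mathbb{P}[|S| > 2Np] \leq e^{-Np(2\ln 2 - 1)}$; this delivers the first factor in the stated probability and lets me condition on $E = \{|S| \leq 2Np\}$, under which $S$ is feasible in (\ref{eqn:MAP2}) so that $J(\widehat{S}) \leq J(S)$.

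I would then substitute $\ve{y} = \ve{A}_S \ve{x}_S + \ve{e}$ into $J(\widehat{S}) \leq J(S)$ and partition indices into $S_0 = S \cap \widehat{S}$, $S_1 = S \setminus \widehat{S}$ (the missed set), and $S_2 = \widehat{S} \setminus S$. Because $|S \cup \widehat{S}| \leq 4Np$ and $\delRIP \leq 1/3$, RIP gives $\tfrac{2}{3}\ve{I} \preceq \ve{A}_T^T \ve{A}_T \preceq \tfrac{4}{3}\ve{I}$ for every $T \subseteq S \cup \widehat{S}$. Combined with the Woodbury identity
\begin{equation*}
\Ph(T)^{-1} = \sigma_e^{-2}\ve{I} - \sigma_e^{-2}\ve{A}_T\bigl(\sigma_e^2 \sigma_1^{-2}\ve{I} + \ve{A}_T^T \ve{A}_T\bigr)^{-1}\ve{A}_T^T
\end{equation*}
and $\det\Ph(T) = \sigma_e^{2M}\det(\ve{I} + (\sigma_1^2/\sigma_e^2)\ve{A}_T^T \ve{A}_T)$, this sandwiches every log-det and every quadratic form in the cost by explicit scalar expressions. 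The log-det differences together with $|S|\ln\tfrac{1-p}{p}$ both scale linearly in $|T| \leq 2Np$ and produce the constant $C = \ln(1 + \tfrac{4\sigma_1^2}{3\sigma_e^2}) + 2\ln\tfrac{1-p}{p}$; meanwhile, the quadratic form $\ve{x}_{S_1}^T \ve{A}_{S_1}^T \Ph(\widehat{S})^{-1}\ve{A}_{S_1}\ve{x}_{S_1}$ is bounded below by a positive multiple of $\|\ve{x}_{S_1}\|_2^2$.

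The noise then enters only through projections of $\ve{e}$ onto column subspaces of dimension at most $2Np$, each being $\sigma_e^2$ times a chi-squared variable with at most $2Np$ degrees of freedom. The tail $\mathbb{P}[\chi_k^2 \geq k\beta] \leq e^{-k(\beta - 1 - \ln\beta)/2}$, together with a union bound over the (at most three) such quantities that appear in the expansion, yields the factor $1 - 3e^{-Np(\beta - 1 - \ln\beta)}$. On the intersection of these events the MAP inequality reduces to a quadratic of the form $a\|\ve{x}_{S_1}\|_2^2 - b\|\ve{x}_{S_1}\|_2 - c \leq 0$ with $b/a$ scaling as $2\sqrt{\beta Np}\,\sigma_e$ and $c/a$ as $4(7\beta + C)Np\,\sigma_e^2$; its positive root, bounded using $\sqrt{1+t} \leq 1 + \sqrt{t}$, is exactly $2(\sqrt{7\beta + C} + \sqrt{\beta})\sqrt{Np}\,\sigma_e = K_1\sqrt{Np}\,\sigma_e$.

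The main obstacle will be the algebraic bookkeeping: extracting a clean positive lower bound on the quadratic form in $\ve{x}_{S_1}$ after it combines with the cross term $2\ve{e}^T \Ph(\widehat{S})^{-1}\ve{A}_{S_1}\ve{x}_{S_1}$, and tracking the numerical constants so that the three RIP-induced noise contributions together with the log-det and cardinality contributions collapse into exactly the combination $7\beta + C$ under the square root. A loose application of Cauchy--Schwarz at this step would inflate these constants and miss the displayed form of $K_1$.
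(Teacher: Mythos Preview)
Your plan matches the paper's proof almost step for step: condition on $E=\{|S|\le 2Np\}$ via the same Chernoff bound, exploit $\gamma(\widehat S)\le\gamma(S)$, sandwich the log-det and quadratic-form pieces with RIP (the paper uses SVD of $\ve A_{S_0},\ve A_{S_{01}},\ve A_{S_{02}}$ rather than Woodbury, but the effect is identical and produces exactly your constant $C$), and close with three chi-squared tails union-bounded to give $1-3e^{-Np(\beta-1-\ln\beta)}$. The only small correction is that one of those three chi-squared variables is not a noise projection but the demeaned \emph{signal} energy $\|\ve z_{S}\|_2^2/\sigma_1^2$, which appears in the upper bound on $\gamma_2(S)$ and contributes the $4\beta Np$ piece of your $7\beta$; with that amendment your bookkeeping and the final quadratic resolution $\|\ve x_{S_1}\|_2\le \sqrt{2}\bigl(\sqrt{2\beta}+\sqrt{14\beta+2C}\bigr)\sqrt{Np}\,\sigma_e=K_1\sqrt{Np}\,\sigma_e$ are exactly what the paper does.
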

Different values of the parameter $\beta$ give different values of the constant $K_1$ and also the probability
with which the energy in the missed coefficients is bounded by $K_1^2 Np \sigma_e^2$. Both $K_1$ and the minimum probability 
are increasing function of $\beta$. This is natural since as we increase the bound, {\em i.e.,} make it loose, the 
probability with which it is satisfied also increases. We also see that the constant $C$ is dependent on $p$ and the ratio
$\sigma_1^2/\sigma_e^2$. Thus the constant $K_1$ increases as the signal model is known to be more sparse. The dependence 
of $K_1$ on $\sigma_1^2/\sigma_e^2$ is a bit counterintuitive. As we discuss in section~\ref{sec:BMAP:disc}, this bound 
becomes loose at high SNR. At very high value of this ratio there is no missed coefficient with a very high probability.

\subsection{Perfect Support Recovery}  \label{sec:BMAP:perf_recov}
It is hard to recover the support set perfectly for the zero-mean signal model since a significant number of
coefficients are close to zero. Hence they are almost impossible to detect in the presence of noise. 
If the signal mean is 
high enough to ensure that all the coefficients are well above the noise level then all of them are detected
with a high probability. But even then ensuring that no false alarm happens is tough. It requires even higher
value of the mean. The following theorem states these results.
\begin{theorem}[Sufficient Condition for Perfect Support Recovery]\label{th:BMAP2}
For the signal and observation models under consideration, all active coefficients are selected
{\em i.e.,} there is no missed coefficient with probability exceeding 
$(1-e^{-Np(2 \ln2 -1)})(1-3 e^{- Np(\beta-1- \ln \beta)} - e^{- \frac{(\bar{\beta}-1 - \ln \bar{\beta})}{2}})$
if $|\mu_1| > K_2 \sigma_1 + K_1 \sqrt{Np} \sigma_e$ where $K_2=\sqrt{\bar{\beta}}$, 
and $\beta, \bar{\beta} > 1$.  $K_1$ and $C$~are as defined in theorem~\ref{th:BMAP1}.
Perfect support recovery happens with the same probability if $|\mu_1| > K_3 \sigma_1 + K_4 \sqrt{Np} \sigma_e$, where
$K_3=\max\{K_2,6 \sqrt{2\beta Np} \}$ and $K_4= \max\{K_1, 3 \left(\frac{1}{2} + \sqrt{3}\right) \sqrt{2\beta}\}$. 
\end{theorem}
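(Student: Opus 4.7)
The plan is to split perfect support recovery into two subclaims — no missed coefficient ($S \subseteq \widehat{S}$) and no false alarm ($\widehat{S} \subseteq S$) — and to control each failure mode by adding one new high-probability event on top of Theorem~\ref{th:BMAP1}. Throughout I condition on $E = \{|S| \leq 2Np\}$, whose Chernoff bound on the Binomial count $|S|$ supplies the outer factor $1 - e^{-Np(2\ln 2 - 1)}$ that already appears in Theorem~\ref{th:BMAP1}.

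For the no-missed claim I argue by contradiction: suppose some $i \in S \setminus \widehat{S}$ exists. Theorem~\ref{th:BMAP1} then gives $|x_i|^2 \leq \|\ve{x}_{S \setminus \widehat{S}}\|_2^2 \leq K_1^2 Np\,\sigma_e^2$ on its own high-probability event, contributing the $3e^{-Np(\beta - 1 - \ln\beta)}$ summand. Writing $x_i = \mu_1 + \sigma_1 z_i$ with $z_i \sim \mathcal{N}(0,1)$ and applying the triangle inequality yields $\sigma_1 |z_i| \geq |\mu_1| - K_1\sqrt{Np}\,\sigma_e > K_2 \sigma_1 = \sqrt{\bar\beta}\,\sigma_1$ under the hypothesis on $|\mu_1|$, i.e.\ $z_i^2 > \bar\beta$. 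Hence the missed-coefficient event is contained in $\{z_j^2 > \bar\beta \text{ for some } j \in S\}$, and a $\chi^2_1$ Chernoff tail on a single representative coordinate (absorbed, if needed, into a loose redefinition of $\bar\beta$) contributes the remaining $e^{-(\bar\beta - 1 - \ln\bar\beta)/2}$ term in the stated probability.

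For the no-false-alarm claim, I would compare the MAP cost $J(\cdot)$ in~(\ref{eqn:MAP2}) at $S$ with that at any candidate $S' \supsetneq S$ of cardinality at most $2Np$: each spurious column pays $\ln\frac{1-p}{p}$ in the sparsity penalty but can reduce the quadratic term only by the residual energy along that new direction. Using RIP~(\ref{eqn:RIP}) to bound the eigenvalues of $\Ph(S')$ uniformly for $|S'| \leq 4Np$ linearizes the $\tfrac12\ln\det\Phs$ contribution and reduces the quadratic piece to a noise-projection inequality. Splitting the residual into the mean piece $\mu_1 \ve{A}_S\ve{1}_{|S|}$, the Gaussian signal $\sigma_1 \ve{A}_S \ve{z}_S$, and the measurement noise $\ve{e}$, the cross-talk $\ve{A}_j^T(\mu_1 \ve{A}_S \ve{1}_{|S|})$ scales like $|\mu_1|\sqrt{Np}$ times the coherence of $\ve{A}$, which is exactly why the threshold must grow as $\sqrt{Np}\,\sigma_1$ and produces the $6\sqrt{2\beta Np}\,\sigma_1$ summand in $K_3$. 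A single $\chi^2$ deviation for the noise projection onto the subspace spanned by $\{\ve{A}_j\}_{j \in S'}$ supplies the $3(\tfrac12+\sqrt 3)\sqrt{2\beta}$ coefficient in $K_4$, and taking maxima with $K_1, K_2$ preserves the no-missed guarantee without altering the probability.

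The main obstacle is the no-false-alarm comparison, which must hold simultaneously over the exponentially large family of candidate supersets. The cleanest way around this is to peel off false alarms one index at a time: because RIP keeps $\Ph(S \cup \{j\})$ well-conditioned for every $j \notin S$, the cost increment $J(S \cup \{j\}) - J(S)$ can be written in closed form and dominated by a single Gaussian/$\chi^2$ deviation at coordinate $j$, against which the sparsity penalty $\ln\frac{1-p}{p}$ is tested by a union bound over the $N - |S|$ indices outside the true support. Extending this from single-column false alarms to arbitrary $\widehat{S} \supsetneq S$ (or, equivalently, examining the combinatorial KKT conditions of the constrained minimization~(\ref{eqn:MAP2})), together with the careful bookkeeping of the mean-driven cross-talk $\mu_1\ve{A}_S\ve{1}_{|S|}$ that couples false alarms back to the true signal, is where I expect the bulk of the technical effort to lie.
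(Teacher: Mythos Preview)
Your no-missed-coefficient argument is essentially the paper's: both invoke the energy bound of Theorem~\ref{th:BMAP1} on $\|\ve{x}_{S\setminus\widehat S}\|_2$ and then a $\chi^2$ tail on the Gaussian fluctuation $\ve{z}$ to force a contradiction once $|\mu_1| > K_2\sigma_1 + K_1\sqrt{Np}\,\sigma_e$. The paper applies the tail to the vector $\|\ve{z}_{S_1}\|_2$ (giving $\|\ve{x}_1\|_2 \ge (|\mu_1|-\sqrt{\bar\beta}\sigma_1)\sqrt{|S_1|}$) rather than to a single coordinate $z_i$, which is how it lands on exactly $e^{-(\bar\beta-1-\ln\bar\beta)/2}$ without the union-bound fudge you gesture at.

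The no-false-alarm part has a real gap: you misidentify the mechanism and therefore plan a harder proof than is needed. The scaling $|\mu_1|\gtrsim\sqrt{Np}\,\sigma_1$ in $K_3$ does \emph{not} come from cross-talk $\ve{A}_j^T(\mu_1\ve{A}_S\ve{1}_{|S|})$; by RIP that quantity is $O(\delRIP|\mu_1|\sqrt{Np})$ and is small. The paper instead observes that, once $S_1=\emptyset$, the quadratic piece $\gamma_2(S_{02})$ centers $\ve y$ by $\mu_1\ve{A}_{02}\ve{1}_{02}$, which erroneously subtracts $\mu_1\ve{A}_2\ve{1}_2$ from a residual containing no $\ve{A}_2$ component. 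Projecting onto the part of $\mathrm{span}(\ve{A}_{02})$ orthogonal to $\mathrm{span}(\ve{A}_0)$ yields a \emph{positive} contribution of order $\|\bar{\ve e}_2-\mu_1\ve{A}_2\ve{1}_2\|_2^2/\bigl((1+\delRIP)\sigma_1^2+\sigma_e^2\bigr)$ to $\gamma(S_{02})-\gamma(S_{01})$; it is the $\sigma_1^2$ in this denominator that forces $|\mu_1|\gtrsim\sqrt{\beta Np}\,\sigma_1$ so that the term dominates the residual $O(\beta Np)$ errors. In particular $\gamma_2$ is \emph{increased}, not merely ``reduced by the residual energy along the new direction'' as in the zero-mean picture you describe. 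And because $\widehat S=S_{02}$ is already the constrained minimizer, the paper only compares $\gamma(S_{02})$ with $\gamma(S_{01})$---no peeling, no union bound over $N-|S|$ candidate columns, and no extension from single to multiple false alarms is required.
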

Here the condition  $|\mu_1| > K_2 \sigma_1 + K_1 \sqrt{Np} \sigma_e$ is needed for probabilistic guarantee for no
misdetection. This condition implies that if the distribution of $\ve{x}_S$ is such that with very high probability 
absolute values of all the elements are above the noise level in the subspace spanned by the active columns of the 
measurement matrix then with very high probability there is no active coefficient excluded from~$\widehat{S}$.
In addition to this condition, we also need $|\mu_1| >  6 \sqrt{2\beta Np} \sigma_1 + 3 \left(\frac{1}{2} + \sqrt{3}\right) \sqrt{2\beta} \sqrt{Np} \sigma_e$
for guarantee on no false alarm.

%%%%%%%%%%%%%%%%%%%%%%%%%%%%%%%%%%%%%%%%%%%%%%%%%%%%%%%%%%%%%%%%%%%%%%%%%%%%%%%%
\section{Proofs} \label{sec:BMAP:proofs}
\subsection{Some Propositions}\label{sec:BMAP_proofs:prop}
Before proceeding further we provide the following propositions. The first proposition is a consequence of RIP. It shows 
near orthonormality of the columns of $\ve{A}$ matrix {\em i.e.,}  
the column spaces of any two submatrices $\ve{A}_i$ and $\ve{A}_j$ 
of the matrix $\ve{A}$ are almost orthogonal to each other if $S_i \cap S_j =\emptyset$ and $|S_i|+|S_j|\leq 4Np$.
\begin{proposition}\label{proposition:bp}
If $S_i \subset \{1, 2, \ldots,  N \} $, $S_j \subset \{1, 2, \ldots,  N \} $, $S_i \cap S_j = \emptyset$, $\ve{A}$ 
satisfies RIP with $(4Np,\delRIP)$ and $|S_i|+|S_j| \leq 4Np$, then 
the vector induced norm $ \| \ve{A}_i^T \ve{A}_j \| _2 \leq \delRIP$.
\end{proposition}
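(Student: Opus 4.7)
The plan is to use RIP together with the polarization identity. First I would let $\ve{u} \in \mathbb{R}^{|S_i|}$ and $\ve{v} \in \mathbb{R}^{|S_j|}$ be arbitrary unit vectors, and interpret the block vectors $[\ve{u}; \ve{v}]$ and $[\ve{u}; -\ve{v}]$ as vectors in $\mathbb{R}^N$ supported on $S_i \cup S_j$. Since $S_i \cap S_j = \emptyset$ and $|S_i| + |S_j| \leq 4Np$, both of these are $4Np$-sparse, so RIP applies.

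Next, because the supports are disjoint, $\|\ve{u} \pm \ve{v}\|_2^2 = \|\ve{u}\|_2^2 + \|\ve{v}\|_2^2 = 2$. Applying the RIP inequality~(\ref{eqn:RIP}) to both sign combinations gives
\begin{equation*}
2(1-\varepsilon) \;\leq\; \|\ve{A}_i \ve{u} \pm \ve{A}_j \ve{v}\|_2^2 \;\leq\; 2(1+\varepsilon).
\end{equation*}
Then I would invoke the polarization identity
\begin{equation*}
4\, \ve{u}^T \ve{A}_i^T \ve{A}_j \ve{v} \;=\; \|\ve{A}_i \ve{u} + \ve{A}_j \ve{v}\|_2^2 \;-\; \|\ve{A}_i \ve{u} - \ve{A}_j \ve{v}\|_2^2,
\end{equation*}
so that the two-sided RIP bounds combine to yield $|\ve{u}^T \ve{A}_i^T \ve{A}_j \ve{v}| \leq \varepsilon$. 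Taking the supremum over unit $\ve{u}$ and $\ve{v}$ gives $\|\ve{A}_i^T \ve{A}_j\|_2 \leq \varepsilon$, which is the claim.

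There is no real obstacle here; the only thing to be careful about is the bookkeeping that justifies embedding $\ve{u}$ and $\ve{v}$ into a single $4Np$-sparse vector, which relies crucially on disjointness of $S_i$ and $S_j$ (so the $\pm$ combinations have identical norm $\sqrt{2}$) and on $|S_i|+|S_j| \leq 4Np$ (so RIP is applicable). Everything else is a one-line polarization computation.
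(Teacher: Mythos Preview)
Your argument is correct. It is the standard polarization-identity route to this inequality (as in Cand\`es' original RIP papers), and the bookkeeping you flag---disjointness of $S_i,S_j$ so that $\|\ve{u}\pm\ve{v}\|_2^2=2$, and $|S_i|+|S_j|\le 4Np$ so that RIP applies---is exactly what is needed.

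The paper, however, takes a different and slightly more structural route (attributed to Needell--Tropp~\cite{ref:CoSaMP}): with $S=S_i\cup S_j$, the matrix $\ve{A}_i^T\ve{A}_j$ is an off-diagonal block of $\ve{A}_S^T\ve{A}_S$, hence a submatrix of $\ve{A}_S^T\ve{A}_S-\ve{I}_{|S|}$. Since the spectral norm of a submatrix never exceeds that of the full matrix, and RIP pins the eigenvalues of $\ve{A}_S^T\ve{A}_S$ to $[1-\delRIP,1+\delRIP]$, one gets $\|\ve{A}_i^T\ve{A}_j\|_2\le\|\ve{A}_S^T\ve{A}_S-\ve{I}_{|S|}\|_2\le\delRIP$ directly. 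The two arguments are essentially equivalent in strength and effort: yours makes the role of disjointness explicit through the norm computation $\|\ve{u}\pm\ve{v}\|_2^2=2$, while the paper's version hides it in the observation that $\ve{A}_i^T\ve{A}_j$ sits in the off-diagonal block (which requires $S_i\cap S_j=\emptyset$). Neither generalizes further than the other, and both are one-liners once the key observation is made.
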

\begin{proof}
This proof is due to~\cite{ref:CoSaMP}. 
Let $S=S_i \cup S_j$. Note that $\ve{A}_i^T\ve{A}_j$ is a submatrix of $\ve{A}_S^T\ve{A}_S-\ve{I}_{|S|}$.
Since the induced norm of a submatrix never exceeds the norm of the matrix, 
\begin{eqnarray}
\| \ve{A}_i^T \ve{A}_j \|_2 \leq \|\ve{A}_S^T\ve{A}_S-\ve{I}_{|S|} \|_2 \leq \max \{(1+\delRIP)-1, 1-(1-\delRIP) \} =\delRIP, 
\end{eqnarray}
since the singular values of the matrix $\ve{A}_S^T\ve{A}_S$ lie between $1-\delRIP$ and $1+\delRIP$. 
\end{proof}
\begin{proposition}\label{prop:bp1}
Let $\ve{A}_i=\ve{U}_i \ve{\Sigma}_i \ve{V}_i^T$ be the Singular Value Decomposition (SVD) of $\ve{A}_i$. Let
$\bar{\ve{U}}_i$ be the submatrix formed by taking the first $|S_i|$ columns of $\ve{U}_i$ and
$\underbar{\ve{U}}_i$ be the submatrix formed by taking the rest $M-|S_i|$ columns of $\ve{U}_i$.
If $\ve{x} \in \mathbb{R}^{|S_j|}$, then $\| \bar{\ve{U}}_i^T \ve{A}_j \ve{x}\|_2 \leq \frac{\delRIP}{\sqrt{1-\delRIP}} \|\ve{x} \|_2$ and  $\| \underbar{\ve{U}}_i^T \ve{A}_j \ve{x}\|_2 \geq \sqrt{\frac{1- 2 \delRIP}{1-\delRIP}} \|\ve{x} \|_2$. Also, if $\ve{v} \in \mathbb{R}^{|S_j|}$, then $\| \underbar{\ve{U}}_i^T \bar{\ve{U}}_j \ve{v}\|_2 \geq \sqrt{\frac{1- 2 \delRIP}{1-\delRIP^2}} \|\ve{v} \|_2$ where $\bar{\ve{U}}_j$ is defined similar to $\bar{\ve{U}}_i$.
\end{proposition}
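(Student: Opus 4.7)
The plan is to exploit the SVD structure directly: because $\ve{A}_i = \bar{\ve{U}}_i \bar{\ve{\Sigma}}_i \ve{V}_i^T$ (keeping only the nonzero part of $\ve{\Sigma}_i$, which is square of size $|S_i|$), the matrix $\bar{\ve{U}}_i$ spans the column range of $\ve{A}_i$ while $\underbar{\ve{U}}_i$ spans the orthogonal complement inside $\mathbb{R}^M$. The RIP hypothesis gives $\sqrt{1-\delRIP} \leq \sigma_{\min}(\bar{\ve{\Sigma}}_i) \leq \sigma_{\max}(\bar{\ve{\Sigma}}_i) \leq \sqrt{1+\delRIP}$ (and similarly for $\bar{\ve{\Sigma}}_j$), and Proposition~1 supplies $\|\ve{A}_i^T\ve{A}_j\|_2 \leq \delRIP$. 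These are the only two ingredients I need.

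For the first inequality, I would rearrange $\ve{A}_i^T = \ve{V}_i \bar{\ve{\Sigma}}_i \bar{\ve{U}}_i^T$ to obtain $\bar{\ve{U}}_i^T = \bar{\ve{\Sigma}}_i^{-1} \ve{V}_i^T \ve{A}_i^T$, so that
\begin{equation*}
\|\bar{\ve{U}}_i^T \ve{A}_j \ve{x}\|_2 \leq \|\bar{\ve{\Sigma}}_i^{-1}\|_2 \, \|\ve{A}_i^T \ve{A}_j \ve{x}\|_2 \leq \tfrac{1}{\sqrt{1-\delRIP}} \cdot \delRIP \, \|\ve{x}\|_2,
\end{equation*}
which is the stated bound.

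For the second inequality, I would use that $[\bar{\ve{U}}_i, \underbar{\ve{U}}_i]$ is orthogonal, so the Pythagorean identity gives
\begin{equation*}
\|\underbar{\ve{U}}_i^T \ve{A}_j \ve{x}\|_2^2 = \|\ve{A}_j \ve{x}\|_2^2 - \|\bar{\ve{U}}_i^T \ve{A}_j \ve{x}\|_2^2.
\end{equation*}
The first term is at least $(1-\delRIP)\|\ve{x}\|_2^2$ by RIP applied to $\ve{A}_j$ (valid since $|S_j| \leq 4Np$), and the second term is at most $\frac{\delRIP^2}{1-\delRIP}\|\ve{x}\|_2^2$ from the first part. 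Subtracting and simplifying $(1-\delRIP) - \frac{\delRIP^2}{1-\delRIP} = \frac{1-2\delRIP}{1-\delRIP}$ yields the claim.

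For the third inequality, my plan is to reduce it to the second by writing $\bar{\ve{U}}_j = \ve{A}_j \ve{V}_j \bar{\ve{\Sigma}}_j^{-1}$, so that $\bar{\ve{U}}_j \ve{v} = \ve{A}_j \ve{w}$ with $\ve{w} = \ve{V}_j \bar{\ve{\Sigma}}_j^{-1} \ve{v}$. Applying the second inequality gives $\|\underbar{\ve{U}}_i^T \bar{\ve{U}}_j \ve{v}\|_2 \geq \sqrt{\tfrac{1-2\delRIP}{1-\delRIP}}\,\|\ve{w}\|_2$, and since $\ve{V}_j$ is orthogonal while $\|\bar{\ve{\Sigma}}_j^{-1}\|_{\min\text{-SV}} \geq \tfrac{1}{\sqrt{1+\delRIP}}$, we get $\|\ve{w}\|_2 \geq \tfrac{1}{\sqrt{1+\delRIP}} \|\ve{v}\|_2$. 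Multiplying produces the factor $\sqrt{\tfrac{1-2\delRIP}{1-\delRIP^2}}$. No single step here looks genuinely hard; the only thing to be careful about is consistent conventions for the SVD (full vs.\ reduced) so that $\bar{\ve{\Sigma}}_i$ is genuinely invertible, and making sure the RIP cardinality budget $|S_i|+|S_j|\leq 4Np$ is respected whenever Proposition~1 or RIP on $\ve{A}_j$ alone is invoked.
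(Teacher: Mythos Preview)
Your proposal is correct and matches the paper's proof essentially step for step: the first bound via $\bar{\ve{U}}_i^T = \bar{\ve{\Sigma}}_i^{-1}\ve{V}_i^T\ve{A}_i^T$ together with Proposition~\ref{proposition:bp}, the second via the Pythagorean decomposition $\|\ve{A}_j\ve{x}\|_2^2 = \|\bar{\ve{U}}_i^T\ve{A}_j\ve{x}\|_2^2 + \|\underbar{\ve{U}}_i^T\ve{A}_j\ve{x}\|_2^2$, and the third by substituting $\bar{\ve{U}}_j\ve{v}=\ve{A}_j\ve{w}$ with $\ve{w}=\ve{V}_j\bar{\ve{\Sigma}}_j^{-1}\ve{v}$ and using $\|\ve{w}\|_2\geq (1+\delRIP)^{-1/2}\|\ve{v}\|_2$. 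The only cosmetic difference is that the paper runs the third substitution in the reverse direction (setting $\ve{v}=\bar{\ve{\Sigma}}_j\ve{V}_j^T\ve{x}$), which is equivalent.
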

\begin{proof}
From proposition~\ref{proposition:bp}, $\| \ve{A}_i^T \ve{A}_j \ve{x} \|_2 \leq \delRIP \|\ve{x}\|_2$ and
$\| \ve{A}_i^T \ve{A}_j \ve{x} \|_2 = \|\ve{V}_i \ve{\Sigma}_i^T \ve{U}_i^T \ve{A}_j \ve{x} \|_2 =   \|\bar{\ve{\Sigma}}_i^T \bar{\ve{U}}_i^T \ve{A}_j \ve{x} \|_2 $
where $\bar{\ve{\Sigma}}_i$ is the upper left $|S_i| \times |S_i|$ diagonal submatrix of $\ve{\Sigma}_i$.
Thus $\|\bar{\ve{\Sigma}}_i^T \bar{\ve{U}}_i^T \ve{A}_j \ve{x} \|_2 \leq \delRIP \|\ve{x}\|_2$.  Since 
elements on $\mathrm{diag}({\bar{\ve{\Sigma}}_i}) \geq \sqrt{1-\delRIP}$, we conclude that 
$\| \bar{\ve{U}}_i^T \ve{A}_j \ve{x}\|_2 \leq \frac{\delRIP}{\sqrt{1-\delRIP}} \|\ve{x} \|_2$. 
Now $\| \ve{A}_j \ve{x}\|_2^2 \geq (1-\delRIP) \| \ve{x}\|_2^2$. Thus 
$\|\underbar{\ve{U}}_i^T \ve{A}_j \ve{x}\|_2 \geq \sqrt{(1-\delRIP)-\frac{\delRIP^2}{1-\delRIP}} \|\ve{x}\|_2 = \sqrt{\frac{1-2\delRIP}{1-\delRIP}} \|\ve{x}\|_2$.
Now we can rewrite this as  $\|\underbar{\ve{U}}_i^T \bar{\ve{U}}_j \bar{\ve{\Sigma}}_j \ve{V}_j^T \ve{x}\|_2 \geq \sqrt{\frac{1-2\delRIP}{1-\delRIP}} \|\ve{x}\|_2$. Taking $\ve{v}=\bar{\ve{\Sigma}}_j \ve{V}_j^T \ve{x}$, we see that
 $\|\underbar{\ve{U}}_i^T \bar{\ve{U}}_j \ve{v}\|_2 \geq \sqrt{\frac{1-2\delRIP}{1-\delRIP}} \|\ve{x}\|_2 \geq \sqrt{\frac{1-2\delRIP}{1-\delRIP}} \frac{1}{\sqrt{1+\delRIP}}\|\ve{v}\|_2$.
\end{proof}
\begin{corollary}\label{corr:bp2}
If $\ve{x} \in \mathbb{R}^{|S_j|}$, then for the i.i.d. Gaussian signal model, 
$\ve{x}^T \ve{A}_j^T \Ph(S_i)^{-1} \ve{A}_j \ve{x} \geq \frac{1-2\delRIP}{1-\delRIP} \frac{\|\ve{x}\|_2^2}{\sigma_e^2}$. 
Also the singular values of $\ve{A}_j^T \Ph(S_i)^{-1} \ve{A}_j$ are greater than or equal to $\frac{1-2\delRIP}{1-\delRIP} \frac{1}{\sigma_e^2}$.
\end{corollary}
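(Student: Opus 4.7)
The plan is to reduce the claim to Proposition~\ref{prop:bp1} by diagonalizing the covariance matrix $\Ph(S_i) = \sigma_1^2 \ve{A}_i \ve{A}_i^T + \sigma_e^2 \ve{I}_M$ in the orthonormal basis provided by the left singular vectors of $\ve{A}_i$. Write $\ve{A}_i = \ve{U}_i \ve{\Sigma}_i \ve{V}_i^T$ and split $\ve{U}_i = [\bar{\ve{U}}_i, \underbar{\ve{U}}_i]$ as in Proposition~\ref{prop:bp1}. Then $\ve{A}_i \ve{A}_i^T = \bar{\ve{U}}_i \bar{\ve{\Sigma}}_i^2 \bar{\ve{U}}_i^T$, so that
\begin{eqnarray}
\Ph(S_i) = \bar{\ve{U}}_i \bigl( \sigma_1^2 \bar{\ve{\Sigma}}_i^2 + \sigma_e^2 \ve{I}_{|S_i|} \bigr) \bar{\ve{U}}_i^T + \sigma_e^2 \underbar{\ve{U}}_i \underbar{\ve{U}}_i^T,
\end{eqnarray}
which is block diagonal in the basis $\ve{U}_i$. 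Inverting block by block gives
\begin{eqnarray}
\Ph(S_i)^{-1} = \bar{\ve{U}}_i \bigl( \sigma_1^2 \bar{\ve{\Sigma}}_i^2 + \sigma_e^2 \ve{I}_{|S_i|} \bigr)^{-1} \bar{\ve{U}}_i^T + \sigma_e^{-2} \underbar{\ve{U}}_i \underbar{\ve{U}}_i^T.
\end{eqnarray}

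Next I would sandwich this identity between $\ve{A}_j^T$ and $\ve{A}_j$, producing a sum of two positive semidefinite matrices. Dropping the $\bar{\ve{U}}_i$ term (which can only decrease the quadratic form) yields
\begin{eqnarray}
\ve{x}^T \ve{A}_j^T \Ph(S_i)^{-1} \ve{A}_j \ve{x} \;\geq\; \sigma_e^{-2} \, \| \underbar{\ve{U}}_i^T \ve{A}_j \ve{x} \|_2^2.
\end{eqnarray}
At this point the lower bound $\| \underbar{\ve{U}}_i^T \ve{A}_j \ve{x} \|_2 \geq \sqrt{(1-2\delRIP)/(1-\delRIP)} \, \|\ve{x}\|_2$ from Proposition~\ref{prop:bp1} directly delivers the claimed inequality. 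Finally, since $\ve{A}_j^T \Ph(S_i)^{-1} \ve{A}_j$ is symmetric positive semidefinite, its singular values equal its eigenvalues, and the minimum eigenvalue is exactly $\min_{\|\ve{x}\|_2 = 1} \ve{x}^T \ve{A}_j^T \Ph(S_i)^{-1} \ve{A}_j \ve{x}$, so the singular value bound is immediate from the quadratic form bound.

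There is no real obstacle here. The only subtlety is the bookkeeping in the block-diagonal decomposition — one must verify that the disjointness of the supports $S_i$ and $S_j$ (which is needed to apply Proposition~\ref{prop:bp1}) together with $|S_i| + |S_j| \leq 4Np$ is implicitly assumed in the corollary, since it inherits the hypotheses of the proposition it relies on. Everything else is linear algebra of a symmetric PSD matrix and a straightforward application of the already-proved near-orthogonality estimate.
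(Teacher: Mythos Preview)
Your proposal is correct and follows essentially the same argument as the paper: diagonalize $\Ph(S_i)$ via the SVD of $\ve{A}_i$, split $\Ph(S_i)^{-1}$ into the $\bar{\ve{U}}_i$ and $\underbar{\ve{U}}_i$ blocks, discard the nonnegative $\bar{\ve{U}}_i$ contribution, and invoke Proposition~\ref{prop:bp1} for the remaining term. The only cosmetic difference is that the paper deduces the singular-value bound by evaluating the quadratic form on each singular vector $\ve{u}_k$, whereas you use the equivalent variational characterization of the minimum eigenvalue.
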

\begin{proof}
%The statements follow from the facts that $\ve{A}_j^T \Ph(S_i)^{-1} \ve{A}_j$ is symmetric (and positive
%definite) and $\ve{A}_j^T \Ph(S_i)^{-1} \ve{A}_j  =  \ve{A}_j^T \ve{U}_i (\sigma_1^2 \ve{\Sigma}_i \ve{\Sigma}_i^T + \sigma_e^2 \ve{I}_M)^{-1} \ve{U}_i^T \ve{A}_j$. 
%We note that $\Ph(S_i)^{-1}  =   \ve{U}_i (\sigma_1^2 \ve{\Sigma}_i \ve{\Sigma}_i^T + \sigma_e^2 \ve{I}_M)^{-1} \ve{U}_i^T$
We note that,
\begin{eqnarray}
&&\ve{\Phi}(S_j)=\sigma_1^2\ve{A}_j\ve{A}_j^T + \sigma_e^2\ve{I}_M  
=\ve{U}_j (\sigma_1^2 \ve{\Sigma}_j\ve{\Sigma}_j^T + \sigma_e^2\ve{I}_M)\ve{U}_j^T, \\
&&\mathrm{hence,} \qquad \ve{\Phi}(S_j)^{-1}=\ve{U}_j (\sigma_1^2 \ve{\Sigma}_j\ve{\Sigma}_j^T + \sigma_e^2\ve{I}_M)^{-1}\ve{U}_j^T, \label{eqn:Phis0}
\end{eqnarray}
and $\ve{A}_j^T \Ph(S_i)^{-1} \ve{A}_j$ is a symmetric and positive definite matrix. Thus
\begin{eqnarray}
&&\ve{x}^T \ve{A}_j^T \Ph(S_i)^{-1} \ve{A}_j \ve{x} =  \ve{x}^T \ve{A}_j^T \ve{U}_i (\sigma_1^2 \ve{\Sigma}_i \ve{\Sigma}_i^T + \sigma_e^2 \ve{I}_M)^{-1} \ve{U}_i^T \ve{A}_j \ve{x} \\
&& \qquad = \ve{x}^T \ve{A}_j^T \bar{\ve{U}}_i (\sigma_1^2 \bar{\ve{\Sigma}}_i \bar{\ve{\Sigma}}_i^T + \sigma_e^2 \ve{I}_{|S_i|})^{-1} \bar{\ve{U}}_i^T \ve{A}_j \ve{x} + \ve{x}^T \ve{A}_j^T \underbar{\ve{U}}_i ( \sigma_e^2 \ve{I}_{M-|S_i|})^{-1} \underbar{\ve{U}}_i^T \ve{A}_j \ve{x} \\
&& \qquad \geq  \ve{x}^T \ve{A}_j^T \underbar{\ve{U}}_i ( \sigma_e^2 \ve{I}_{M-|S_i|})^{-1} \underbar{\ve{U}}_i^T \ve{A}_j \ve{x} = \frac{1}{\sigma_e^2} \| \underbar{\ve{U}}_i^T \ve{A}_j \ve{x} \|_2^2 \geq \frac{1-2\delRIP}{1-\delRIP} \frac{\|\ve{x}\|_2^2}{\sigma_e^2}. \label{eqn:corr_ineq}
\end{eqnarray}
The last inequality follows from proposition~\ref{prop:bp1}. Since $\ve{A}_j^T \Ph(S_i)^{-1} \ve{A}_j$ is symmetric and positive definite, it has SVD 
$\ve{A}_j^T \Ph(S_i)^{-1} \ve{A}_j = \ve{U} \ve{\Sigma} \ve{U}^T$. Let the $k^\mathrm{th}$ singular value be $\sigma_k$ and 
the singular vector corresponding to the singular value $\sigma_k$ be $\ve{u}_k \in \mathbb{R}^{|S_j|}$. Then
\begin{eqnarray}
\ve{u}_k^T \ve{A}_j^T \Ph(S_i)^{-1} \ve{A}_j \ve{u}_k= \ve{u}_k^T \ve{U} \ve{\Sigma} \ve{U}^T \ve{u}_k = \sigma_k. \label{eqn:corr_smin}
\end{eqnarray}
Since $\|\ve{u}_k\|_2^2 =1$, from~(\ref{eqn:corr_ineq}) and~(\ref{eqn:corr_smin}) it follows that $\sigma_k \geq \frac{1-2\delRIP}{1-\delRIP} \frac{1}{\sigma_e^2}$ and this
is true for any $k$. 
\end{proof}
%%%%%%%%%%%%%%%%%%%%%%%%%%%%%%%%%%%%%%%%%%%%%%%%%%%%%%%%%%%%%%%%%%%%%%%%%%%%%%%%%%%%%%%%%%%%%%%
The next proposition is about the tail probability bound of the Chi-squared distribution.
\begin{proposition}\label{BMAP_proposition:chi_sq}
Suppose $n$ independent and identically distributed variables $X_i \sim \mathcal{N}(0,\sigma^2)$. 
If Chi-squared distributed random variable $Z=\sum_{i=1}^n X_i^2$, then for any $\beta > 1$,
\begin{eqnarray}
&&\mathbb{P} \left[Z>\beta n\sigma^2\right] \leq e^{-\frac{n}{2} (\beta-1 - \ln \beta)}. \label{eqn:BMAP:chi_sq} %\\
%&&\Pr\left\{Z<\frac{n\sigma^2}{\beta}\right\} \leq e^{-\frac{n}{2} \left[\log \beta -\frac{\beta -1}{\beta}\right]}  
\end{eqnarray}
\end{proposition}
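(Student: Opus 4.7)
The plan is to use a standard Chernoff bound on the chi-squared distribution. Write $Y = Z/\sigma^2 = \sum_{i=1}^n (X_i/\sigma)^2$, which is a chi-squared random variable with $n$ degrees of freedom, and restate the target as $\mathbb{P}[Y > \beta n] \leq e^{-\frac{n}{2}(\beta - 1 - \ln \beta)}$.

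First, I would apply the exponential Markov inequality: for any $t \in (0, 1/2)$,
\begin{equation*}
\mathbb{P}[Y > \beta n] = \mathbb{P}[e^{tY} > e^{t \beta n}] \leq e^{-t \beta n} \, \mathbb{E}[e^{tY}].
\end{equation*}
Since $X_i/\sigma \sim \mathcal{N}(0,1)$ are i.i.d., the moment generating function of a single square is $\mathbb{E}[e^{t (X_i/\sigma)^2}] = (1-2t)^{-1/2}$ for $t < 1/2$, so by independence $\mathbb{E}[e^{tY}] = (1-2t)^{-n/2}$. Thus
\begin{equation*}
\mathbb{P}[Y > \beta n] \leq \exp\!\left( -t \beta n - \tfrac{n}{2} \ln(1-2t) \right).
\end{equation*}

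Next, I would optimize the exponent over $t \in (0, 1/2)$. Differentiating $-t\beta n - \tfrac{n}{2}\ln(1-2t)$ with respect to $t$ and setting the result to zero gives the optimizer $t^\star = \tfrac{1}{2}(1 - 1/\beta)$, which lies in $(0,1/2)$ precisely because $\beta > 1$. Substituting this choice yields $1 - 2t^\star = 1/\beta$ and hence
\begin{equation*}
-t^\star \beta n - \tfrac{n}{2} \ln(1 - 2t^\star) = -\tfrac{n(\beta-1)}{2} + \tfrac{n}{2}\ln \beta = -\tfrac{n}{2}(\beta - 1 - \ln \beta),
\end{equation*}
which is the claimed bound after undoing the rescaling $Y = Z/\sigma^2$.

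There is no real obstacle here: this is the textbook Chernoff bound for a $\chi^2_n$ tail, and the only things to check are that the MGF exists for the chosen $t^\star$ (guaranteed by $\beta > 1$) and that the function $\beta - 1 - \ln \beta$ is nonnegative, so the bound is meaningful. The proposition follows in a few lines.
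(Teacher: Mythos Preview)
Your proposal is correct and follows essentially the same approach as the paper: rescale to a standard $\chi^2_n$ variable, apply the exponential Markov (Chernoff) inequality using the chi-squared MGF $(1-2t)^{-n/2}$, and optimize over $t\in(0,1/2)$ to obtain $t^\star=\tfrac{\beta-1}{2\beta}$, yielding the stated bound.
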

\begin{proof}
Let $\bar{X}_i=\frac{X_i}{\sigma}$. Then $\bar{X}_i \sim \mathcal{N}(0,1)$ and are independently distributed.
Let $\bar{Z} = \sum_{i=1}^n \bar{X}_i^2 =\frac{Z}{\sigma^2}$ which is Chi-squared distributed with degree of freedom $n$.
Using Chernoff inequality,
\begin{eqnarray}
\mathbb{P} \left[Z>\beta n\sigma^2\right] &=& \mathbb{P} \left[ e^{t\bar{Z}} > e^{n \beta t} \right], \qquad \mathrm{for~any}~~ t>0 \\
&\leq& \frac{\mathbb{E} \left[ e^{t \bar{Z}}\right ]}{e^{n \beta t}} = \frac{ \prod_{i=1}^n \mathbb{E} \left[ e^{t \bar{X}_i^2}\right ]}{e^{n \beta t}} = \frac{(1-2t)^{-\frac{n}{2}}}{e^{n \beta t}},  \qquad \mathrm{for}~~  t \in (0, 1/2) \nonumber\\
&=& e^{-\frac{n}{2}(\ln(1-2t)+2\beta t)}.
\end{eqnarray}
The minimum is attained at $t=\frac{\beta-1}{2\beta}$ which gives inequality~(\ref{eqn:BMAP:chi_sq}).
\end{proof}
We also use the following inequality at various places. If $c,d >0$, then
\begin{eqnarray}
\frac{(a+b)^2}{c+d} \leq \frac{(a+b)^2 + (a-b)^2}{c+d} =  \frac{2a^2}{c+d} +  \frac{2b^2}{c+d} < \frac{2a^2}{c} + \frac{2b^2}{d}.
\end{eqnarray}

%%%%%%%%%%%%%%%%%%%%%%%%%%%%%%%%%%%%%%%%%%%%%%%%%%%%%%%%%%%%%%%%%%%%%%%%%%%%%%%%
\subsection{Proof of Theorem~\ref{th:BMAP1}} \label{proof:BMAP1}
Let us divide the indices for the columns of the \ve{A} matrix into four disjoint subsets $S_0$, $S_1$, $S_2$ and 
$S_3$ such that $S_0$ denotes the columns which are in the true support and are correctly identified by the 
constrained MAP estimator $\widehat{S}$, $S_1$ denotes the missed columns, $S_2$ denotes the columns which are not 
in the true support but selected by $\widehat{S}$,  and $S_3$ denotes the columns which are neither in true support nor 
in $\widehat{S}$. Define $S_{ij}=S_i \cup S_j$. Let $\ve{A}_{ij}$ denote the matrix consisting of those columns 
of \ve{A} which are indexed by the set $S_{ij}$. Thus,
\begin{eqnarray}
\ve{y} = \ve{A}_{01} \ve{x}_{01} + \ve{e} = \mu_1 \ve{A}_{01} \ve{1}_{01} + \ve{A}_{01} \ve{z}_{01} + \ve{e}, \label{eqn:y}
\end{eqnarray}
where $\ve{z}_{01} \sim \mathcal{N}(\ve{0},\sigma_1^2 \ve{1}_{|S_{01}|})$. For zero mean model, 
$\mu_1=0$ and $\ve{z}_{01}=\ve{x}_{01}$.

We have defined the event~$E$ to be $|S_{01}| \leq 2Np$. The mean value of $|S_{01}|$ is $\mathbb{E}[|S_{01}|]=Np$. 
Using Chernoff bound on upper tail of Binomial distribution~\cite[pp.~68]{ref:RandAlg}, 
\begin{eqnarray}
\mathbb{P}[|S_{01}| > (1+ \delta) \mathbb{E}[|S_{01}|] ] < \left(\frac{e^\delta}{(1+\delta)^{(1+\delta)}} \right)^{\mathbb{E}[|S_{01}|]}.
\end{eqnarray}
Taking $\delta=1$,
\begin{eqnarray}
\mathbb{P}[E] = \mathbb{P}[|S_{01}| \leq 2Np] > 1-e^{-Np(2 \ln2 -1)}.
\end{eqnarray}
If $E^c$ denotes the complement of $E$ {\em i.e.,} the event $|S| > 2Np$, then for any event $B$,
\begin{eqnarray}
\mathbb{P}[B] = \mathbb{P} [E] \mathbb{P} [B|E] + \mathbb{P} [E^c] \mathbb{P} [B|E^c]  \geq \mathbb{P} [E] \mathbb{P} [B|E].
\end{eqnarray}
For the rest of the proof we assume that event $E$ holds and all the 
subsequent probabilities are conditioned on event~$E$.

For convenience we define the function to be minimized in~(\ref{eqn:MAP2}) as $\gamma(S)$ {\em i.e.,}  
\begin{eqnarray}
\gamma(S) &=& \frac{1}{2} \ln \det (\Phs) + \frac{1}{2} (\ve{y}-\mu_1 \ve{A}_S\ve{1}_{|S|})^T\Phsi (\ve{y}-\mu_1 \ve{A}_S\ve{1}_{|S|}) +  |S| \ln \frac{1-p}{p}  \label{eqn:gammas} \\
&=& \frac{1}{2}\gamma_1(S) + \frac{1}{2}\gamma_2(S) + \gamma_3(S),  \label{eqn:gamma3s}
\end{eqnarray}
where $\gamma_1(S)=\ln \det (\Phs)$, $\gamma_2(S)=(\ve{y}-\mu_1 \ve{A}_S\ve{1}_{|S|})^T\Phsi (\ve{y}-\mu_1 \ve{A}_S\ve{1}_{|S|})$ and $\gamma_3(S)=|S| \ln \frac{1-p}{p}$.

Let the SVD of $\ve{A}_{0}$ be $\ve{A}_{0}=\ve{U}_{0} \ve{\Sigma}_{0} \ve{V}_{0}^T$. Let $\bar{\ve{U}}_{0}$ denote the submatrix of $\ve{U}_{0}$ consisting of the 
first $|S_{0}|$ columns and $\underbar{\ve{U}}_{0}$ denote the submatrix with the rest of the columns.
Thus $\bar{\ve{U}}_{0}$ forms an orthonormal basis for the column space $\mathcal{A}_{0}$ of $\ve{A}_{0}$.
$\underbar{\ve{U}}_{0}$ forms an orthonormal basis for the space 
$\mathbb{R}^M \setminus \mathcal{A}_{0}$. Let $\bar{\ve{\Sigma}}_{0}$ denote the $|S_{0}| \times |S_{0}|$
upper left square submatrix of $\ve{\Sigma}_{0}$. From~(\ref{eqn:Phs}),
\begin{eqnarray}
\Ph(S_{01})&=&\Ph(S_{0}) + \sigma_1^2 \ve{A}_1 \ve{A}_1^T.
\end{eqnarray}
Hence applying matrix determinant lemma,
\begin{eqnarray}
\gamma_1(S_{01}) &=&\ln \det(\Ph(S_{01}))=\ln \det(\Ph(S_{0})) + \ln \det ( \ve{I}_{|S_1|} + \sigma_1^2 \ve{A}_1^T \Ph(S_{0})^{-1} \ve{A}_1) \\
&=& \ln \det(\Ph(S_{0})) + \ln \det ( \ve{I}_{|S_1|} + \sigma_1^2 \ve{A}_1^T   \ve{U}_0 (\sigma_1^2 \ve{\Sigma}_0 \ve{\Sigma}_0^T + \sigma_e^2 \ve{I}_M)^{-1} \ve{U}_0^T \ve{A}_1) \\
&\leq& \ln \det(\Ph(S_{0})) + |S_1|\ln \left( 1 + \frac{\sigma_1^2}{\sigma_e^2} (1+\delRIP)\right). \label{eqn:Bth1:lndetS01} 
\end{eqnarray}
The inequality in~(\ref{eqn:Bth1:lndetS01}) follows from the facts that the maximum singular value of the matrix $\ve{A}_1$ is $\sqrt{1+\delRIP}$ and maximum
value on the diagonal of the diagonal matrix $(\sigma_1^2 \ve{\Sigma}_0 \ve{\Sigma}_0^T + \sigma_e^2 \ve{I}_M)^{-1}$ is $\frac{1}{\sigma_e^2}$ and
$\sigma_1^2 \ve{A}_1^T   \ve{U}_0 (\sigma_1^2 \ve{\Sigma}_0 \ve{\Sigma}_0^T + \sigma_e^2 \ve{I}_M)^{-1} \ve{U}_0^T \ve{A}_1$, being a  symmetric and positive definite matrix, has
SVD of the form $\ve{U} \ve{\Sigma} \ve{U}^T$.
A lower bound on $\gamma_1(S_{02})$ can be obtained proceeding 
in a similar way as~(\ref{eqn:Bth1:lndetS01}) was obtained but taking lower bound instead of upper bound. We note that from corollary~\ref{corr:bp2}, the minimum singular value 
of $\sigma_1^2 \ve{A}_2^T \Ph(S_{0})^{-1} \ve{A}_2$ is at least $\frac{1-2\delRIP}{1-\delRIP} \frac{\sigma_1^2}{\sigma_e^2}$. Thus
\begin{eqnarray}
\gamma_1(S_{02})=\ln \det (\Ph(S_{02})) \geq  \ln \det(\Ph(S_{0})) + |S_2|\left( 1 + \frac{\sigma_1^2}{\sigma_e^2} \left(\frac{1- 2\delRIP}{1-\delRIP}\right)\right). \label{eqn:Bth1:lndetS02} 
\end{eqnarray}
Let the SVD of $\ve{A}_{01}$ be $\ve{A}_{01}=\ve{U}_{01}\ve{\Sigma}_{01} \ve{V}_{01}^T$. 
Let $\bar{\ve{U}}_{01}$ denote the submatrix of $\ve{U}_{01}$ consisting of the 
first $|S_{01}|$ columns and $\underbar{\ve{U}}_{01}$ denote the submatrix with the rest of the columns.
Thus $\bar{\ve{U}}_{01}$ forms an orthonormal basis for the column space $\mathcal{A}_{01}$ of $\ve{A}_{01}$.
$\underbar{\ve{U}}_{01}$ forms an orthonormal basis for the space 
$\mathbb{R}^M \setminus \mathcal{A}_{01}$.  
The measured data $\ve{y}$ is noisy linear combination of the columns of $\ve{A}$ selected by $S_{01}$.
\begin{eqnarray}
\ve{y} - \mu_1 \ve{A}_{01} \ve{1}_{01} =\ve{A}_{01}\ve{z}_{01} +\ve{e}
=\ve{U}_{01}\ve{\Sigma}_{01}\ve{V}_{01}^T\ve{z}_{01} + \ve{e} \label{eqn:ySVD}    
\end{eqnarray}
Let $\ve{e} = \bar{\ve{U}}_{01}\bar{\ve{e}}_{01} +\underbar{\ve{U}}_{01}\underbar{\ve{e}}_{01}$. Thus 
from~(\ref{eqn:Phis0}) and (\ref{eqn:ySVD}) and the fact that 
$\underbar{\ve{U}}_{01}^T\ve{A}_{01}\ve{z}_{01} = \ve{0}_{M-|S_{01}|}$,
\begin{eqnarray}
\gamma_2(S_{01})&=&(\ve{y}- \mu_1 \ve{A}_{01} \ve{1}_{01})^T \Ph(S_{01})^{-1} (\ve{y}- \mu_1 \ve{A}_{01} \ve{1}_{01}) \\
&=&(\ve{U}_{01} \ve{\Sigma}_{01} \ve{V}_{01}^T\ve{z}_{01} + \ve{e})^T\ve{U}_{01} (\sigma_1^2 \ve{\Sigma}_{01} \ve{\Sigma}_{01}^T + \sigma_e^2 \ve{I}_M)^{-1} \ve{U}_{01}^T (\ve{U}_{01} \ve{\Sigma}_{01} \ve{V}_{01}^T\ve{z}_{01}+\ve{e}) \\
&=&(\bar{\ve{\Sigma}}_{01} \ve{V}_{01}^T\ve{z}_{01} + \bar{\ve{e}}_{01})^T  (\sigma_1^2 \bar{\ve{\Sigma}}_{01} \bar{\ve{\Sigma}}_{01}^T + \sigma_e^2 \ve{I}_{01})^{-1}  (\bar{\ve{\Sigma}}_{01} \ve{V}_{01}^T\ve{z}_{01}+\bar{\ve{e}}_{01}) + \frac{1}{\sigma_e^2} \| \underbar{\ve{e}}_{01}\|_2^2 \\ 
&\leq& \frac{(\sqrt{1+\delRIP} \|\ve{z}_{01}\|_2 + \|\bar{\ve{e}}_{01} \|_2)^2}{(1-\delRIP) \sigma_1^2 + \sigma_e^2}   + \frac{\|\underbar{\ve{e}}_{01}\|_2^2}{\sigma_e^2}  \\
&<& \frac{2 (1+\delRIP)\|\ve{z}_{01}\|_2^2 }{(1-\delRIP)\sigma_1^2 }   + \frac{2 \|\bar{\ve{e}}_{01} \|_2^2}{\sigma_e^2}   + \frac{\|\underbar{\ve{e}}_{01}\|_2^2}{\sigma_e^2}. \label{eqn:Bth1:g2S01}
\end{eqnarray}
Now we obtain a lower bound on $\gamma_2(S_{02})$. Let the SVD of 
$\ve{A}_{02}$ be $\ve{A}_{02}=\ve{U}_{02} \ve{\Sigma}_{02} \ve{V}_{02}^T$. 
Let $\bar{\ve{U}}_{02}$, $\underbar{\ve{U}}_{02}$ and $\bar{\ve{\Sigma}}_{02}$ be defined similar to 
$\bar{\ve{U}}_{01}$, $\underbar{\ve{U}}_{01}$ and $\bar{\ve{\Sigma}}_{01}$ respectively.
%Let $\bar{\ve{U}}_{02}$ denote the submatrix of $\ve{U}_{02}$ consisting of the 
%first $|S_{02}|$ columns and $\underbar{\ve{U}}_{02}$ denote the submatrix with the rest of the columns.
%Thus $\bar{\ve{U}}_{02}$ forms an orthonormal basis for the column space $\mathcal{A}_{02}$ of $\ve{A}_{02}$.
%$\underbar{\ve{U}}_{02}$ forms an orthonormal basis for the space 
%$\mathbb{R}^M \setminus \mathcal{A}_{02}$. Let $\bar{\ve{\Sigma}}_{02}$ denote the $|S_{02}| \times |S_{02}|$
%upper left square submatrix of $\ve{\Sigma}_{02}$. 
Let $\bar{\ve{W}}_{1 \setminus 02}$ be an orthonormal basis 
for the subspace spanned by $\underbar{\ve{U}}_{02} \underbar{\ve{U}}_{02}^T \bar{\ve{U}}_1$. Let us denote this 
subspace by $\sp{A}_{1 \setminus 02}$. Also, let $\bar{\ve{U}}_{012}$ be an orthonormal basis for the column space
of~$\ve{A}_{012}$ and $\underbar{\ve{U}}_{012}$ be an orthonormal basis for the left null space $\mathbb{R}^M \setminus \sp{A}_{012}$.
The two subspaces $\sp{A}_{1 \setminus 02}$ and $\mathbb{R}^M \setminus \sp{A}_{012}$ are orthogonal and their union 
is the subspace $\mathbb{R}^M \setminus \sp{A}_{02}$. Now,
\begin{eqnarray}
\gamma_2(S_{02})&=&(\ve{y} - \mu_1 \ve{A}_{02} \ve{1}_{02})^T\Ph(S_{02})^{-1} (\ve{y}-\mu_1 \ve{A}_{02}\ve{1}_{02})\\ 
&=&(\ve{y} - \mu_1 \ve{A}_{02} \ve{1}_{02})^T \ve{U}_{02} (\sigma_1^2 \ve{\Sigma}_{02} \ve{\Sigma}_{02}^T + \sigma_e^2 \ve{I}_M)^{-1} \ve{U}_{02}^T(\ve{y}-\mu_1 \ve{A}_{02}\ve{1}_{02})\\ 
&=&(\ve{y} - \mu_1 \ve{A}_{02} \ve{1}_{02})^T \bar{\ve{U}}_{02} (\sigma_1^2 \bar{\ve{\Sigma}_{02}} \bar{\ve{\Sigma}}_{02}^T + \sigma_e^2 \ve{I}_{02})^{-1} \bar{\ve{U}}_{02}^T(\ve{y}-\mu_1 \ve{A}_{02}\ve{1}_{02}) \label{eqn:Bth1:2nd02ineq_pre} \nonumber \\
&& \quad \quad \quad \quad \quad \quad \quad \quad + \frac{1}{\sigma_e^2} \|\underbar{\ve{U}}_{02}^T(\ve{y}-\mu_1 \ve{A}_{02}\ve{1}_{02})\|_2^2 \label{eqn:Bth1:2nd02ineq}\\
&\geq& \frac{1}{\sigma_e^2} \|\underbar{\ve{U}}_{02}^T(\ve{y}-\mu_1 \ve{A}_{02}\ve{1}_{02}) \|_2^2  \label{eqn:Bth1:2nd03ineq}\\
&=& \frac{1}{\sigma_e^2} \|\underbar{\ve{U}}_{02}^T (\ve{A}_0 \ve{x}_0 + \ve{A}_1 \ve{x}_1 + \ve{e}-\mu_1 \ve{A}_{02}\ve{1}_{02})\|_2^2 = \frac{1}{\sigma_e^2} \|\underbar{\ve{U}}_{02}^T (\ve{A}_1 \ve{x}_1 + \ve{e})\|_2^2 \\
&=& \frac{1}{\sigma_e^2} \|\bar{\ve{W}}_{1 \setminus 02}^T (\ve{A}_1 \ve{x}_1 + \ve{e})\|_2^2 + \frac{1}{\sigma_e^2} \|\underbar{\ve{U}}_{102}^T (\ve{A}_1 \ve{x}_1 + \ve{e})\|_2^2 \\
&=& \frac{1}{\sigma_e^2} \|\underbar{\ve{U}}_{02}^T \ve{A}_1 \ve{x}_1 + \bar{\ve{W}}_{1 \setminus 02}^T \ve{e}\|_2^2 + \frac{1}{\sigma_e^2} \|\underbar{\ve{U}}_{102}^T \ve{e}\|_2^2 
\end{eqnarray}
Now from proposition~\ref{prop:bp1}, $\|\underbar{\ve{U}}_{02}^T \ve{A}_1 \ve{x}_1\|_2 \geq \sqrt{\frac{1-2\delRIP}{1-\delRIP}} \| \ve{x}_1 \|_2$. 
We assume that $ \sqrt{\frac{1-2\delRIP}{1-\delRIP}} \| \ve{x}_1 \|_2 > \|\bar{\ve{W}}_{1 \setminus 02}^T \ve{e}\|_2$.
Otherwise there is nothing left to prove. We note that 
$\|\bar{\ve{W}}_{1 \setminus 02}^T \ve{e}\|_2= \|\bar{\ve{e}}_{1 \setminus 02} \|_2 \leq \|\bar{\ve{e}}_1\|_2$.
Thus,
\begin{eqnarray}
%\gamma_2(S_{02})&\geq& \frac{1}{\sigma_e^2} \left( \sqrt{\frac{1-2\delRIP}{1-\delRIP}} \| \ve{x}_1 \|_2 - \|\bar{\ve{e}}_{1 \setminus 02}\|_2\right)^2  + \frac{1}{\sigma_e^2} \|\underbar{\ve{e}}_{012}\|_2^2 \label{eqn:Bth1:prelb2nd11}\\
\gamma_2(S_{02}) \geq \frac{1}{\sigma_e^2} \left( \sqrt{\frac{1-2\delRIP}{1-\delRIP}} \| \ve{x}_1 \|_2 - \|\bar{\ve{e}}_1\|_2\right)^2  + \frac{1}{\sigma_e^2} \|\underbar{\ve{e}}_{012}\|_2^2. \label{eqn:Bth1:prelb2nd1} 
\end{eqnarray}
%The inequality~\ref{eqn:Bth1:prelb2nd11} follows from corollary~\ref{corr:bp1} by considering the projection of $\ve{A}_1 \ve{x}_1$ 
%onto the subspace $\sp{A}_1 \setminus \sp{A}_{02}$ and the fact that 
%$\sp{A}_1 \setminus \sp{A}_{02} \subset \mathbb{R}^M \setminus \sp{A}_{02}$. 
Also, 
\begin{eqnarray}
\gamma_3(S_{01}) - \gamma_3(S_{02}) = (|S_1| - |S_2|) \ln \frac{1-p}{p}.   \label{eqn:Bth1:g3S0102}
\end{eqnarray}
Now since $S_{02}=\widehat{S}$, $\gamma(S_{02}) \leq \gamma(S_{01})$. Thus, from~(\ref{eqn:Bth1:lndetS01}), (\ref{eqn:Bth1:lndetS02}), (\ref{eqn:Bth1:g2S01}), (\ref{eqn:Bth1:prelb2nd1}) and~(\ref{eqn:Bth1:g3S0102}),
\begin{eqnarray}
&&|S_2| \left(  \ln \left(1+ \frac{\sigma_1^2}{\sigma_e^2} \left( \frac{1-2\delRIP}{1-\delRIP}\right) \right) + 2 \ln \frac{1-p}{p}\right)
+ \frac{\left( \sqrt{\frac{1-2\delRIP}{1-\delRIP}} \| \ve{x}_1 \|_2 - \|\bar{\ve{e}}_1\|_2\right)^2}{\sigma_e^2} + \frac{ \|\underbar{\ve{e}}_{012}\|_2^2}{\sigma_e^2}\nonumber\\
&& \qquad \leq |S_1| \left(  \ln \left(1+ \frac{\sigma_1^2}{\sigma_e^2} (1 + \delRIP) \right) + 2 \ln \frac{1-p}{p}\right)  + \frac{\|\underbar{\ve{e}}_{01}\|_2^2}{\sigma_e^2} + \frac{2 (1+\delRIP)\|\ve{z}_{01}\|_2^2 }{(1-\delRIP)\sigma_1^2 }   + \frac{2 \|\bar{\ve{e}}_{01} \|_2^2}{\sigma_e^2}. \label{eqn:Bth1:ineq1}
\end{eqnarray}
Since $|S_2|  \geq 0$ and $p < 1/2$ for sparse signals, the first term is non-negative. Hence,
\begin{eqnarray}
\frac{\left( \sqrt{\frac{1-2\delRIP}{1-\delRIP}} \| \ve{x}_1 \|_2 - \|\bar{\ve{e}}_1\|_2\right)^2}{\sigma_e^2}    &\leq& |S_1| \left(  \ln \left(1+ \frac{\sigma_1^2}{\sigma_e^2} (1 + \delRIP) \right) + 2 \ln \frac{1-p}{p}\right)  + \frac{2 (1+\delRIP)\|\ve{z}_{01}\|_2^2 }{(1-\delRIP)\sigma_1^2 }   \nonumber\\
&&\qquad+ \frac{2 \|\bar{\ve{e}}_{01} \|_2^2}{\sigma_e^2} + \frac{\|\underbar{\ve{e}}_{01}\|_2^2 - \|\underbar{\ve{e}}_{012}\|_2^2}{\sigma_e^2}. \label{eqn:Bth1:ineq2} 
\end{eqnarray}
Now $\|\underbar{\ve{e}}_{01}\|_2^2 - \|\underbar{\ve{e}}_{012}\|_2^2 \leq \|\bar{\ve{e}}_{2 \setminus 01}\|_2^2 \leq \|\bar{\ve{e}}_2\|_2^2$. 
Now consider the expression $\frac{ \|\bar{\ve{e}}_{01} \|_2^2}{\sigma_e^2}$. Note that 
$\bar{\ve{e}}_{01} = \bar{\ve{U}}_{01}^T \ve{e} $ is the projection of $\ve{e}$ onto the $|S_{01}|$-dimensional 
subspace~$\sp{A}_{01}$. Thus $\bar{\ve{e}}_{01} \sim \mathcal{N}(\ve{0} , \sigma_e^2 \ve{I}_{01})$. Let
$\tilde{\ve{e}}_{01}^T = [\bar{\ve{e}}_{01}^T, \bar{e}_1, \bar{e}_2, \ldots, \bar{e}_{\lfloor 2Np-|S_{01}| \rfloor}]$ such that
$\tilde{\ve{e}}_{01} \sim \mathcal{N}(\ve{0} , \sigma_e^2 \ve{I}_{\lfloor 2Np \rfloor})$. By proposition~\ref{BMAP_proposition:chi_sq}, 
$\| \bar{\ve{e}}_{01} \|_2^2 \leq \| \tilde{\ve{e}}_{01} \|_2^2 \leq 2 \beta Np \sigma_e^2$ with probability exceeding
$1-e^{-Np(\beta-1 - \ln \beta)}$ for $\beta>1$. Similarly $\| \bar{\ve{z}}_{01} \|_2^2 \leq 2 \beta Np \sigma_1^2$ with
probability exceeding $1-e^{-Np(\beta-1 - \ln \beta)}$ and  $\| \bar{\ve{e}}_{2} \|_2^2 \leq 2 \beta Np \sigma_e^2$
probability exceeding $1-e^{-Np(\beta-1 - \ln \beta)}$. Therefore with probability exceeding $1-3 e^{- Np(\beta-1 - \ln \beta)}$,
\begin{eqnarray}
\frac{\left( \sqrt{\frac{1-2\delRIP}{1-\delRIP}} \| \ve{x}_1 \|_2 - \|\bar{\ve{e}}_1\|_2\right)^2}{\sigma_e^2} &\leq& C |S_1| + 4 \left( \frac{1+\delRIP}{1-\delRIP}\right)\beta Np + 4 \beta Np + 2 \beta Np \label{eqn:Bth1:ineq4}\\
&\leq&  2C Np + 8 \beta Np + 4 \beta Np + 2 \beta Np = (14 \beta + 2C)Np 
\end{eqnarray}
since $\delRIP \leq 1/3$. 
 Thus,
\begin{eqnarray}
\sqrt{\frac{1-2\delRIP}{1-\delRIP}} \| \ve{x}_1 \|_2 \leq  \|\bar{\ve{e}}_1\|_2 + \sqrt{(14 \beta + 2C) Np} \sigma_e \leq(\sqrt{2 \beta} + \sqrt{14 \beta + 2C}) \sqrt{Np} \sigma_e. \label{eqn:Bth1:ineq3} 
\end{eqnarray}
Since $\delRIP \leq 1/3$, we can write~(\ref{eqn:Bth1:ineq3}) as
$\| \ve{x}_1 \|_2 \leq  \sqrt{2} (\sqrt{2 \beta} + \sqrt{14 \beta + 2C}) \sqrt{Np} \sigma_e$. This holds with overall
probability exceeding $(1-e^{-Np(2 \ln2 -1)}) (1-3 e^{- Np(\beta-1 - \ln \beta)})$ for $\beta>1$.  \qed

%%%%%%%%%%%%%%%%%%%%%%%%%%%%%%%%%%%%%%%%%%%%%%%%%%%%%%%%%%%%%%%%%%%%%%%%%%%%
\subsection{Proof of Theorem~\ref{th:BMAP2}} \label{proof:BMAP2}
Similar to theorem~\ref{th:BMAP1} we assume that event $E$ {\em i.e.,} $|S_{01}| \leq 2Np$ is true. This holds with probability
exceeding $1-e^{-Np(2 \ln2 -1)}$. For the rest of the proof all events and probabilities are conditioned on this event. 

Here we show that if $\mu_1$ and $\sigma_1$ satisfy the condition stated in theorem~\ref{th:BMAP2}, then $\gamma(S_{02})$ cannot be smaller or equal 
to $\gamma(S_{01})$ unless $S_1=S_2=\emptyset$. 
We obtained upper bound on $\gamma(S_{01})$ and lower bound on $\gamma(S_{02})$ in the proof of theorem~\ref{th:BMAP1}. 
If the lower bound is greater than the upper bound then we reach a contradiction that 
$\gamma(S_{02})$ cannot be the estimate of~$S$. This happens when the inequality in~(\ref{eqn:Bth1:ineq1}) is 
reversed, {\em i.e.,} if 
\begin{eqnarray}
&&|S_2| \left(  \ln \left(1+ \frac{\sigma_1^2}{\sigma_e^2} \left( \frac{1-2\delRIP}{1-\delRIP}\right) \right) + 2 \ln \frac{1-p}{p}\right)
+ \frac{\left( \sqrt{\frac{1-2\delRIP}{1-\delRIP}} \| \ve{x}_1 \|_2 - \|\bar{\ve{e}}_1\|_2\right)^2}{\sigma_e^2} + \frac{ \|\underbar{\ve{e}}_{012}\|_2^2}{\sigma_e^2}\nonumber\\
&& \qquad > |S_1| \left(  \ln \left(1+ \frac{\sigma_1^2}{\sigma_e^2} (1 + \delRIP) \right) + 2 \ln \frac{1-p}{p}\right)  + \frac{\|\underbar{\ve{e}}_{01}\|_2^2}{\sigma_e^2} + \frac{2 (1+\delRIP)\|\ve{z}_{01}\|_2^2 }{(1-\delRIP)\sigma_1^2 }   + \frac{2 \|\bar{\ve{e}}_{01} \|_2^2}{\sigma_e^2}. \label{eqn:Bth2:ineq1}
\end{eqnarray}
Thus the following inequality is sufficient for~(\ref{eqn:Bth2:ineq1}) to be true.
\begin{eqnarray}
\frac{\left( \sqrt{\frac{1-2\delRIP}{1-\delRIP}} \| \ve{x}_1 \|_2 - \|\bar{\ve{e}}_1\|_2\right)^2}{\sigma_e^2} + \frac{ \|\underbar{\ve{e}}_{012}\|_2^2}{\sigma_e^2}
&>& |S_1| \left(  \ln \left(1+ \frac{\sigma_1^2}{\sigma_e^2} (1 + \delRIP) \right) + 2 \ln \frac{1-p}{p}\right)  \nonumber\\
&& + \frac{\|\underbar{\ve{e}}_{01}\|_2^2}{\sigma_e^2} + \frac{2 (1+\delRIP)\|\ve{z}_{01}\|_2^2 }{(1-\delRIP)\sigma_1^2 }   + \frac{2 \|\bar{\ve{e}}_{01} \|_2^2}{\sigma_e^2}. \label{eqn:Bth2:ineq2}
\end{eqnarray}
This is equivalent to,
\begin{eqnarray}
\frac{\left( \sqrt{\frac{1-2\delRIP}{1-\delRIP}} \| \ve{x}_1 \|_2 - \|\bar{\ve{e}}_1\|_2\right)^2}{\sigma_e^2}
&>& |S_1| \left(  \ln \left(1+ \frac{\sigma_1^2}{\sigma_e^2} (1 + \delRIP) \right) + 2 \ln \frac{1-p}{p}\right)   \nonumber\\
&& + \frac{\|\underbar{\ve{e}}_{01}\|_2^2 -  \|\underbar{\ve{e}}_{012}\|_2^2}{\sigma_e^2} + \frac{2 (1+\delRIP)\|\ve{z}_{01}\|_2^2 }{(1-\delRIP)\sigma_1^2 }   + \frac{2 \|\bar{\ve{e}}_{01} \|_2^2}{\sigma_e^2}. \label{eqn:Bth2:ineq3}
\end{eqnarray}
We have seen in the proof of theorem~\ref{th:BMAP1} that the right hand side is bounded above by $(14 \beta + 2C)Np$ with probability exceeding 
$1-3 e^{- Np(\beta-1 - \ln \beta)}$. Thus if
\begin{eqnarray}
\| \ve{x}_1 \|_2 >  \sqrt{2} (\sqrt{2 \beta} + \sqrt{14 \beta + 2C}) \sqrt{Np} \sigma_e, \label{eqn:Bth2:ineq4}
\end{eqnarray}
 (\ref{eqn:Bth2:ineq3})~is satisfied with probability exceeding $1-3 e^{- Np(\beta-1 - \ln \beta)}$.
Now for sufficiently large $|\mu_1|$, $\| \ve{x}_1 \|_2 = \|\mu_1 \ve{1}_1 + \ve{z}_1 \|_2 \geq  \|\mu_1 \ve{1}_1 \|_2 - \| \ve{z}_1 \|_2 \geq (|\mu_1| - \sqrt{\bar{\beta}}\sigma_1) \sqrt{|S_1|}$
with probability exceeding $1-e^{- \frac{|S_1|(\bar{\beta}-1 - \ln \bar{\beta})}{2}}$. If $|S_1| \geq 1$, then $\gamma(S_{02})$ becomes 
greater than $\gamma(S_{01})$ with probability exceeding
$1-3 e^{- Np(\beta-1 - \ln \beta)} - e^{- \frac{(\bar{\beta}-1 -\ln \bar{\beta})}{2}}$ if,
\begin{eqnarray}
|\mu_1| >  \sqrt{\bar{\beta}} \sigma_1 + \sqrt{2} (\sqrt{2 \beta} + \sqrt{14 \beta + 2C}) \sqrt{Np} \sigma_e.\label{eqn:Bth2:ineq5}
\end{eqnarray}
Hence $|S_1|=0$ {\em i.e.,} the set~$S_1$ is empty and $S_{01}=S_0$. 
Thus~(\ref{eqn:Bth2:ineq5}) is a probabilistic sufficient condition that no active coefficient is missing.
Now we assume that~(\ref{eqn:Bth2:ineq5}) is satisfied and we investigate what (additional) condition
guarantees no false alarm with very high probability. We assume $S_{2}$ is not empty and find out the condition on $\mu_1$ and $\sigma_1$ that contradicts this 
assumption.
\begin{eqnarray}
\gamma_1(S_{02})  &\geq& \gamma_1(S_{01}) + |S_2| \left(1+\frac{\sigma_1^2}{\sigma_e^2} \left(\frac{1-2\delRIP}{1-\delRIP} \right) \right),\label{eqn:BMAP:th2:g1} \\
\mathrm{and,} \quad \gamma_3(S_{02})&=& \gamma_3(S_{01}) + |S_2| \ln \frac{1-p}{p}. \label{eqn:BMAP:th2:g3} 
\end{eqnarray}
Since set $S_1$ is empty,
\begin{eqnarray}
\gamma_1(S_{01}) \leq \frac{\|\bar{\ve{e}}_0 + \ve{A}_0 \ve{z}_0 \|_2^2}{(1-\delRIP) \sigma_1^2 + \sigma_e^2} + \frac{\|\underbar{\ve{e}}_0\|_2^2}{\sigma_e^2}.
\end{eqnarray}
In obtaining~(\ref{eqn:Bth1:2nd03ineq}) from~(\ref{eqn:Bth1:2nd02ineq_pre}) we lower bounded the first term by zero. Now we use a tighter 
lower bound by explicitly using the condition that $\mu_1 \neq 0$. 
%This gives a tighter lower bound on $\gamma_2(S_{02})$.
\begin{eqnarray}
\gamma_2(S_{02})&=&(\ve{y} - \mu_1 \ve{A}_{02} \ve{1}_{02})^T \bar{\ve{U}}_{02} (\sigma_1^2 \bar{\ve{\Sigma}_{02}} \bar{\ve{\Sigma}}_{02}^T + \sigma_e^2 \ve{I}_{02})^{-1} \bar{\ve{U}}_{02}^T(\ve{y}-\mu_1 \ve{A}_{02}\ve{1}_{02}) \nonumber \\
&& \quad \quad \quad \quad \quad \quad \quad \quad + \frac{1}{\sigma_e^2} \|\underbar{\ve{U}}_{02}^T(\ve{y}-\mu_1 \ve{A}_{02}\ve{1}_{02})\|_2^2. \label{eqn:Bth2:2nd02ineq}
\end{eqnarray}
Here $\ve{y}=\ve{A}_0 \ve{x}_0 + \ve{e}$. Thus $\underbar{\ve{U}}_{02}^T(\ve{y}-\mu_1 \ve{A}_{02}\ve{1}_{02})=\underbar{\ve{U}}_{02}^T \ve{e} =\underbar{\ve{e}}_{02}$.
Let $\bar{\ve{W}}_{0 \setminus 2}$ and $\bar{\ve{W}}_{2 \setminus 0}$ be the orthonormal bases for the orthogonal subspaces $\sp{A}_0 \setminus \sp{A}_2$ and $\sp{A}_2 \setminus \sp{A}_0$
respectively. Thus,
\begin{eqnarray}
\gamma_2(S_{02}) &\geq& \frac{ \| \bar{\ve{U}}_{02}^T(\ve{y}-\mu_1 \ve{A}_{02}\ve{1}_{02}) \|_2^2}{(1+\delRIP)\sigma_1^2 + \sigma_e^2} + \frac{\|\underbar{\ve{e}}_{02}\|_2^2}{\sigma_e^2} \label{eqn:Bth2:2nd02ineq2} \\
&\geq& \frac{\| \bar{\ve{W}}_{0 \setminus 2}^T(\ve{y}-\mu_1 \ve{A}_{02}\ve{1}_{02}) \|_2^2 + \| \bar{\ve{W}}_{2 \setminus 0}^T(\ve{y}-\mu_1 \ve{A}_{02}\ve{1}_{02}) \|_2^2}{(1+\delRIP)\sigma_1^2 + \sigma_e^2} + \frac{\|\underbar{\ve{e}}_{02}\|_2^2}{\sigma_e^2} \label{eqn:Bth2:2nd02ineq3} \\
&=& \frac{\| \bar{\ve{W}}_{0 \setminus 2}^T(\ve{A}_0\ve{z}_0 + \bar{\ve{e}}_0) \|_2^2 + \| \bar{\ve{W}}_{2 \setminus 0}^T(\bar{\ve{e}}_2 -\mu_1 \ve{A}_2\ve{1}_2) \|_2^2}{(1+\delRIP)\sigma_1^2 + \sigma_e^2} + \frac{\|\underbar{\ve{e}}_{02}\|_2^2}{\sigma_e^2} \label{eqn:Bth2:2nd02ineq4} \\
&=& \frac{\| \underbar{\ve{U}}_2^T(\ve{A}_0\ve{z}_0 + \bar{\ve{e}}_0) \|_2^2 + \| \underbar{\ve{U}}_0^T(\bar{\ve{e}}_2 -\mu_1 \ve{A}_2\ve{1}_2) \|_2^2}{(1+\delRIP)\sigma_1^2 + \sigma_e^2} + \frac{\|\underbar{\ve{e}}_{02}\|_2^2}{\sigma_e^2} \label{eqn:Bth2:2nd02ineq4b} \\
&\geq& \left(\frac{1-2\delRIP}{1-\delRIP^2} \right)\frac{\| \ve{A}_0\ve{z}_0 + \bar{\ve{e}}_0 \|_2^2 + \| \bar{\ve{e}}_2 -\mu_1 \ve{A}_2\ve{1}_2 \|_2^2}{(1+\delRIP)\sigma_1^2 + \sigma_e^2} + \frac{\|\underbar{\ve{e}}_{02}\|_2^2}{\sigma_e^2}. \label{eqn:Bth2:2nd02ineq5} 
\end{eqnarray}
The last inequality follows from proposition~\ref{prop:bp1}.
Noting that $\|\underbar{\ve{e}}_0 \|_2^2 -\|\underbar{\ve{e}}_{02} \|_2^2 = \|\bar{\ve{e}}_{2 \setminus 0} \|_2^2 \leq \|\bar{\ve{e}}_2 \|_2^2 $,
\begin{eqnarray}
\gamma_2(S_{02}) -\gamma_2(S_{01}) &\geq& \left(\frac{1-2\delRIP}{1-\delRIP^2} \right) \frac{\| \bar{\ve{e}}_2 -\mu_1 \ve{A}_2\ve{1}_2 \|_2^2}{(1+\delRIP)\sigma_1^2 + \sigma_e^2} - \frac{\|\bar{\ve{e}}_2\|_2^2}{\sigma_e^2} \nonumber \\
&& \quad  - \| \ve{A}_0\ve{z}_0 + \bar{\ve{e}}_0 \|_2^2 \left[ \frac{1}{(1-\delRIP) \sigma_1^2 + \sigma_e^2}  - \left(\frac{1-2\delRIP}{1-\delRIP^2} \right) \frac{1}{(1+\delRIP) \sigma_1^2 + \sigma_e^2}\right]\label{eqn:Bth2:2nd02ineq6} 
\end{eqnarray}
Now the last term
\begin{eqnarray}
&&\| \ve{A}_0\ve{z}_0 + \bar{\ve{e}}_0 \|_2^2 \left[ \frac{1}{(1-\delRIP) \sigma_1^2 + \sigma_e^2}  - \left(\frac{1-2\delRIP}{1-\delRIP^2} \right) \frac{1}{(1+\delRIP) \sigma_1^2 + \sigma_e^2}\right] \\
&& \quad < \| \ve{A}_0\ve{z}_0 + \bar{\ve{e}}_0 \|_2^2 \left[ \frac{1}{(1-\delRIP) (\sigma_1^2 + \sigma_e^2)}  - \left(\frac{1-2\delRIP}{1-\delRIP^2} \right) \frac{1}{(1+\delRIP) (\sigma_1^2 + \sigma_e^2)}\right] \\
&& \quad = \left[\frac{\delRIP(4+\delRIP)}{(1-\delRIP)(1+\delRIP)^2} \right] \frac{\| \ve{A}_0\ve{z}_0 + \bar{\ve{e}}_0 \|_2^2}{\sigma_1^2 + \sigma_e^2}  \leq 4  \left[\frac{\delRIP(4+\delRIP)}{(1-\delRIP)(1+\delRIP)^2} \right]  \beta Np < 6 \beta Np\label{eqn:Bth2:2nd02ineq7}
\end{eqnarray}
since $\frac{\delRIP(4+\delRIP)}{(1-\delRIP)(1+\delRIP)^2} < \frac{3}{2}$ for $\delRIP \leq \frac{1}{3}$. 
Also, $\|\bar{\ve{e}}_2\|_2^2/\sigma_e^2 \leq 2 \beta Np $.
Then from~(\ref{eqn:Bth2:2nd02ineq6}) and~(\ref{eqn:Bth2:2nd02ineq7}),
\begin{eqnarray}
\gamma_2(S_{02}) -\gamma_2(S_{01}) &\geq& \left(\frac{1-2\delRIP}{1-\delRIP^2} \right) \frac{\| \bar{\ve{e}}_2 -\mu_1 \ve{A}_2\ve{1}_2 \|_2^2}{(1+\delRIP)\sigma_1^2 + \sigma_e^2} - 8 \beta Np  \label{eqn:Bth2:2nd02ineq8}
\end{eqnarray}
Thus from~(\ref{eqn:BMAP:th2:g1}),~(\ref{eqn:BMAP:th2:g3}), and~(\ref{eqn:Bth2:2nd02ineq8}), 
\begin{eqnarray}
\gamma(S_{02}) - \gamma(S_{01}) &\geq& |S_2|\left[\frac{1}{2}\left(1+\frac{\sigma_1^2}{\sigma_e^2} \left(\frac{1-2\delRIP}{1-\delRIP} \right) \right) + \ln \frac{1-p}{p}  \right] \nonumber \\
&& \quad + \frac{1}{2} \left(\frac{1-2\delRIP}{1-\delRIP^2} \right) \frac{\| \bar{\ve{e}}_2 -\mu_1 \ve{A}_2\ve{1}_2 \|_2^2}{(1+\delRIP)\sigma_1^2 + \sigma_e^2} - 4 \beta Np. \label{eqn:Bth2:2nd02ineq9}
\end{eqnarray}
The coefficient of the term $|S_2|$ is positive for sparse problems when $p \leq \frac{1}{2}$. 
Then for any positive value of $|S_2|$, we reach a contradiction to the assumption that $S_{02}$ is the estimate if
\begin{eqnarray}
 \frac{1}{2} \left(\frac{1-2\delRIP}{1-\delRIP^2} \right) \frac{\| \bar{\ve{e}}_2 -\mu_1 \ve{A}_2\ve{1}_2 \|_2^2}{(1+\delRIP)\sigma_1^2 + \sigma_e^2}  &>& 4\beta Np. \label{eqn:Bth2:2nd02ineq911}
\end{eqnarray}
Since $\delRIP \leq \frac{1}{3}$, it is sufficient to reach a contradiction that,
\begin{eqnarray}
\| \bar{\ve{e}}_2 -\mu_1 \ve{A}_2\ve{1}_2 \|_2^2  &>& 32 \beta Np \sigma_1^2  + 24 \beta Np \sigma_e^2. \label{eqn:Bth2:2nd02ineq912} 
\end{eqnarray}
We note that $32 \beta Np \sigma_1^2  + 24 \beta Np \sigma_e^2 < (4 \sqrt{2 \beta Np} \sigma_1 + 2\sqrt{6 \beta Np} \sigma_e)^2$ and $ \|\bar{\ve{e}}_2 \|_2 \leq \sqrt{2 \beta Np} \sigma_e$. Thus if $|S_2| \geq 1$, and
\begin{eqnarray}
|\mu_1|(1-\delRIP)  &>& \sqrt{2 \beta Np} \sigma_e  + 4 \sqrt{2 \beta Np} \sigma_1 + 2\sqrt{6 \beta Np} \sigma_e \\
&=&  4 \sqrt{2 \beta Np} \sigma_1 + (1+2\sqrt{3})\sqrt{2 \beta Np} \sigma_e \label{eqn:Bth2:2nd02ineq913}
\end{eqnarray}
then $\gamma(S_{02})$ cannot be smaller than or equal to $\gamma(S_{01})$. Thus $S_2$ must be empty. 
Since $\delRIP \leq \frac{1}{3}$, a probabilistic sufficient condition for no false alarm is
\begin{eqnarray}
|\mu_1|  > 6 \sqrt{2 \beta Np} \sigma_1 + 3 \left(\frac{1}{2}+\sqrt{3}\right)\sqrt{2 \beta Np} \sigma_e,
\end{eqnarray}
which holds with probability exceeding $1-3 e^{- Np(\beta-1 - \ln \beta)}$.  \qed

%%%%%%%%%%%%%%%%%%%%%%%%%%%%%%%%%%%%%%%%%%%%%%%%%%%%%%%%%%%%%%%%%%%%%%%%%%%%%%%%
\section{Discussion} \label{sec:BMAP:disc}
From theorem~\ref{th:BMAP1} we see that the energy of the true signal restricted to the missed coefficients 
is of the order of energy in the projection of noise onto the subspace spanned by the true signal. A natural 
question that arises is what can we say about the estimate of the signal $\widehat{\ve{x}}$ obtained by regressing
with the measurement matrix restricted to the columns indexed by $\widehat{S}$~? We mention here that $\widehat{\ve{x}}$
is not an optimal estimate of $\ve{x}$ like MAP or MMSE estimates obtained directly from the observed data. Now 
$\widehat{\ve{x}}$ is given by
\begin{eqnarray}
\widehat{\ve{x}}= \arg \min_{\substack{\ve{x} \in \mathbb{R}^N\\ \ve{x}_{I \setminus \widehat{S}} = \ve{0} }}  \|\ve{y} - \ve{Ax} \|_2^2. \label{eqn:BMAP:regress}
\end{eqnarray}
and it can be easily shown that
\begin{eqnarray}
\widehat{\ve{x}}_{\widehat{S}}&=&\widehat{\ve{x}}_{02}=(\ve{A}^T_{02} \ve{A}_{02})^{-1} \ve{A}^T_{02} \ve{y} = 
\ve{V}_{02} \bar{\ve{\Sigma}}_{02}^{-1}\bar{\ve{U}}_{02}^T (\ve{A}_{0} \ve{x}_{0} + \ve{A}_1 \ve{x}_1+\ve{e}) \\
&=& \ve{V}_{02} \bar{\ve{\Sigma}}_{02}^{-1}\bar{\ve{U}}_{02}^T (\ve{A}_{02} \ve{x}_{02} + \ve{A}_1 \ve{x}_1+\ve{e})
= \ve{x}_{02} + \ve{V}_{02} \bar{\ve{\Sigma}}_{02}^{-1}\bar{\ve{U}}_{02}^T (\ve{A}_1 \ve{x}_1+\ve{e}).
\end{eqnarray}
Now $\|\ve{V}_{02} \bar{\ve{\Sigma}}_{02}^{-1}\bar{\ve{U}}_{02}^T \ve{A}_1 \ve{x}_1 \|_2 \leq \frac{1}{\sqrt{1-\delRIP}}  \frac{\delRIP}{\sqrt{1-\delRIP}}\|\ve{x}_1 \|_2 \leq \frac{\delRIP}{1-\delRIP}K_1 \sqrt{Np} \sigma_e$ and 
$\|\ve{V}_{02} \bar{\ve{\Sigma}}_{02}^{-1}\bar{\ve{U}}_{02}^T \ve{e} \|_2 \leq \sqrt{\frac{\beta}{1-\delRIP}} \sqrt{Np} \sigma_e$. 
Also $\|\ve{x}_1 - \widehat{\ve{x}}_1\|_2 = \| \ve{x}_1 \|_2 \leq K_1 \sqrt{Np} \sigma_e$.
Thus $\|\widehat{\ve{x}} - \ve{x}\|_2 \leq \left(\frac{K_1}{1-\delRIP} + \sqrt{\frac{\beta}{1-\delRIP}}\right) \sqrt{Np} \sigma_e$ with probability exceeding $(1-e^{-Np(2\ln 2- 1)})(1-3 e^{- Np(\beta-1 - \ln \beta)})$. This is optimal in the
sense that even if the true support was known it is not possible to do any better. This also shows that even if there
is any coefficient~$i$ falsely detected, due to the restricted isometry property, it's estimate $\widehat{\ve{x}}_{\{i\}}$ must be small.

\begin{figure}
\begin{center}
\psfrag{k}[l][Bl][1.0]{$K_1$}
\psfrag{b}[l][Bl][1.0]{$\beta$}
\psfrag{probability}[l][Bl][1.0]{Probability}
\includegraphics[width=0.6\columnwidth]{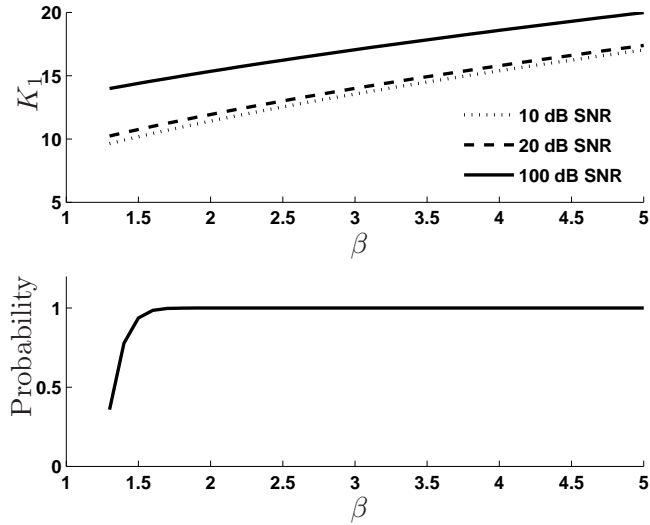}
\caption{The plot in the upper panel shows the constant $K_1$ as a function of the parameter $\beta$. Here
$N=4096, p=0.01, M=256$, $\mu_1=0$ and nominal SNR $10 \log \frac{Np\sigma_1^2}{M\sigma_e^2} =20$~dB.
The figure in the bottom panel shows the least probability with which the energy in the missed 
coefficients is upper bounded by $K_1^2 Np \sigma_e^2$.}
\label{fig:th1}
\end{center}
\end{figure}

Let us analyze the values of the constants appearing in the theorem statements. Consider the example where
$N=4096, p=0.01, M=256$, $\mu_1=0$ and nominal SNR $10 \log \frac{Np\sigma_1^2}{M\sigma_e^2} =20$~dB.
Then for $\beta=1.6$, $K_1=12.94$ and the probability is at least $0.9854$ and for $\beta=2$, 
$K_1=13.77$ and the probability is at least $1-1.06 \times 10^{-5}$. So the constants are modest for reasonable 
values of the system parameters. Fig.~\ref{fig:th1} shows the plots of the constant $K_1$ and the lower bound
of the probability as functions of the parameter~$\beta$ for this example. 
For the same values of $N$, $M$, $p$ and $\frac{\sigma_1^2}{\sigma_e^2}$,
theorem~\ref{th:BMAP2} gives the value of $K_2$ needed to obtain the lower bound on the absolute value of the 
mean $\mu_1$ to probabilistically guarantee perfect support recovery. 
If $\beta=1.6, \bar{\beta}=16$, 
then $K_3=10.75 \sqrt{Np}, K_4=12.94 \sqrt{2Np} $ 
and the probability is at least $0.9832$ and if $\beta=2, \bar{\beta}=25$, 
then $K_3=12.01 \sqrt{Np}, K_4=13.77 \sqrt{2Np} $ 
and the probability is at least $1-4.13 \times 10^{-5}$.

From the statement of theorem~\ref{th:BMAP1} we see that the constant $K_1$ depends on 
$C=\ln(1+\frac{\sigma_1^2}{\sigma_e^2})$.  The term $\frac{\sigma_1^2}{\sigma_e^2}$ is related to SNR. 
We see from Fig.~\ref{fig:th1} that with SNR the constant $K_1$ increases.
So if the SNR increases in an unbounded fashion keeping the noise energy constant then does the energy in the missed 
support grows unbounded? The answer is no. 
If $\sigma_1$ becomes very large then irrespective of the value of $\mu_1$, the probability 
that any element of $\ve{x}$ is close to zero and suppressed by noise becomes very small and 
every element is detected with high probability. 
From~(\ref{eqn:Bth1:ineq4}) we can see that
\begin{eqnarray}
\frac{\left( \sqrt{\frac{1-2\delRIP}{1-\delRIP}} \| \ve{x}_1 \|_2 - \|\bar{\ve{e}}_1\|_2\right)^2}{\sigma_e^2} &\leq& C |S_1| + 8 \beta Np + 4 \beta Np + 2 \beta Np.
\end{eqnarray}
If $|S_1| \neq 0$, the left hand side grows as $\frac{\sigma_1^2}{\sigma_e^2}|S_1|$ whereas the right hand side grows
as $\ln(\frac{\sigma_1^2}{\sigma_e^2})|S_1|$. Thus as SNR grows very large, set $S_1$ has to be empty and there is no
missed coefficient with very high probability. Therefore the upper bound stated in theorem~\ref{th:BMAP1} is loose in 
the very high SNR regime.  For any practical value of the SNR the term $\ln(1+\frac{\sigma_1^2}{\sigma_e^2})$ has a 
moderate value. Hence the constant $K_1$ is a reasonably small constant.

In order to obtain simple expressions in the theorem statements we have used the inequality $\delRIP \leq \frac{1}{3}$ 
instead of having $\delRIP$ appearing in those expressions. As a consequence the constants in the results show the
worst case scenarios when $\delRIP = \frac{1}{3}$. Proceeding in a similar way, for other values of the RIP constant
we can obtain tighter constant values in our results.

%%%%%%%%%%%%%%%%%%%%%%%%%%%%%%%%%%%%%%%%%%%%%%%%%%%%%%%%%%%%%%%%%%%%%%%%%%%%%%%%%
%\section{Conclusion} \label{sec:BMAP:conc}

%%%%%%%%%%%%%%%%%%%%%%%%%%%%%%%%%%%%%%%%%%%%%%%%%%%%%%%%%%%%%%%%%%%%%%%%%%%%%%%%
\section*{Acknowledgement}
The authors would like to thank Philip Schniter for providing valuable suggestions while this work was in progress 
and carefully reviewing an earlier version of this manuscript.

\bibliographystyle{ieeetr}
\bibliography{BMAP}

\end{document}